\newtheorem{assumption}{\normalfont \textbf{Assumption}}
\newcommand{\ind}{\meddiamond}
\newcommand{\indt}[1]{\mathrel{\ind_{#1}}}
\newcommand{\indu}[1]{\mathrel{\wasylozenge_{#1}}}
\newcommand{\dep}{\diamondtimes}
\newcommand{\reach}[1]{\mathop{\mathit{reach}} (#1)}
\newcommand{\obs}[1]{\mathop{\mathit{obs}} (#1)}
\newcommand{\preset}[1]{{{}^\bullet{#1}}}
\newcommand{\postset}[1]{{#1}^\bullet}
\newcommand{\leqi}{\lessdot}
\newcommand{\logs}{\mathcal{L}}
\newcommand{\pnet}{\mathcal{N}}
\newcommand{\sys}{\mathcal{S}}
\newcommand{\model}{\mathcal{M}}
\newcommand{\net}{N}
\newcommand{\netdef}{\net \eqdef (P,T,F)}
\newcommand{\pnetdef}{\pnet \eqdef (P,T,F,\lambda,M_0)}
\newcommand{\netl}{\onl^\sim}
\newcommand{\on}{\beta}
\newcommand{\ondef}{\beta \eqdef (B,E,F)}
\newcommand{\onl}{\on_{\logs,\ind}}
\newcommand{\onls}{\on_{\logs,\ind,*}}
\newcommand{\netls}{\onls^\sim}
\newcommand{\les}{\mathcal{E}}
\newcommand{\lesdef}{\les \eqdef (E,\leq,\#)}
\newcommand{\coe}[1]{\mathop{\mathit{coe}}(#1)}
\newcommand{\co}{\textbf{ co }}
\newcommand{\es}[1]{ES(#1)}
\newcommand{\lpo}[1]{lpo_\ind(#1)}
\newcommand{\arrow}[1]{\mathrel{\raisebox{-1.1pt}{$\xrightarrow{#1}$}}}
\newcommand\nat{\mathbb{N}}
\newcommand\esig{e_{\sigma}}
\newcommand{\ssc}[2]{\phi_{#1,#2}^{sub}}
\newcommand{\disj}[2]{\phi_{#1,#2}^{disj}}
\newcommand{\marking}[1]{\mathop{\mathit{Mark}}(#1)}
\newcommand\podtool  {\textsc{Pod}\@\xspace}
\newcommand\eqdef    {\mathrel{:=}}
\newcommand\tup[1]   {\langle#1\rangle}
\newcommand\set[1]   {{\{ #1 \mathclose \}}}
\newcommand\cfl[1][] {\mathrel{\#_{#1}}}
\newcommand\dcfl     {\cfl[d]}
\newcommand\wrt      {w.r.t.\@\xspace}
\newcommand\eg       {e.g.\@\xspace}
\newcommand\ie       {i.e.\@\xspace}
\newcommand\cols     {cols.\@\xspace}
\newcommand\bench[2][]{%
\if\relax\detokenize{#1}\relax%
\textsc{#2}\else \textsc{#2}\hspace{0.3pt}{\footnotesize(}#1{\footnotesize)}\fi\@\xspace}
\newcommand{\cro}[1]{{\color{teal}\textbf{[César}: #1]}}
\newcommand{\todo}[1]{{\color{red!90!black}\textbf{[TODO}: #1]}}
\newcommand{\cro}[1]{}
\newcommand{\ms}[1]{}
\newcommand{\svs}[1]{}
\newcommand{\todo}[1]{}
\tikzstyle{tplace}=[circle,draw,inner sep=1.5mm]
\title{Unfolding-Based Process Discovery\texorpdfstring{\thanks{This 
is the unabridged version of a paper with the same title appeared at the
proceedings of ATVA~2015.}}{}}
\author{Hernán~Ponce-de-Le\'on\inst{1} \and C\'esar Rodr\'iguez\inst{2} \and Josep~Carmona\inst{3} \and Keijo Heljanko\inst{1} \and Stefan Haar\inst{4}}
\institute{
Helsinki Institute for Information Technology HIIT and Department of Computer Science and Engineering, School of Science, Aalto University, Finland \\
\email{\{hernan.poncedeleon,keijo.heljanko\}@aalto.fi} \\
\and
Universit\'e Paris 13, Sorbonne Paris Cit\'e, LIPN, CNRS, France \\
\email{cesar.rodriguez@lipn.fr}
\and
Universitat Polit\`ecnica de Catalunya, Barcelona, Spain \\
\email{jcarmona@cs.upc.edu} \\
\and
INRIA and LSV, \'Ecole Normale Sup\'erieure de Cachan and CNRS, France \\
\email{stefan.haar@inria.fr}
}
\begin{document}

\maketitle

\begin{abstract}
This paper presents a novel technique for process discovery.
In contrast to the current trend, which only
considers an event log for discovering a process model,
we assume two additional inputs: an independence
relation on the set of logged activities, and a collection of negative traces.
After deriving an intermediate net unfolding from them,
we perform a controlled folding giving rise to a Petri net which
contains both the input log and all independence-equivalent traces arising from it.
Remarkably, the
derived Petri net cannot execute any trace from the negative collection.
The entire chain of transformations is fully automated.
A tool has been developed and experimental results are provided that
witness the significance of the contribution of this paper.
\end{abstract}

\section{Introduction}

The derivation of process models from partial observations has received significant attention in the last
years, as it enables eliciting evidence-based formal representations of the real processes running in a
system~\cite{AalstBook}. This discipline, known as \emph{process discovery}, has similar premises as in
\emph{regression analysis}, i.e., only when moderate assumptions are made on the input data one can derive faithful
models that represent the underlying system.

Formally, a technique for process discovery receives as input an
\emph{event log}, containing the footprints of a process' executions,
and produces a model (\eg, a Petri net) describing the real process.
Many process discovery algorithms in the literature
make strong implicit assumptions.
A widely used 
one is \emph{log completeness}, requiring every possible trace of the underlying
system to be contained in the event log.
This is hard to satisfy by systems with cyclic or infinite
behavior, but also for systems that evolve continuously over time. Another implicit assumption is the lack of
\emph{noise} in the log, \ie,
traces denoting exceptional behavior that should not be contained in the derived
process model.
Finally, every discovery technique has a \emph{representational bias}.
For instance, the $\alpha$-algorithm~\cite{Aalst11} can only discover Petri nets of a specific
class (\emph{structured workflow nets}).

Few attempts have been made to remove the aforementioned assumptions.
One promising direction is to relieve 
the discovery problem by assuming that
more knowledge about the underlying system is available as input.
On this line, the works
in~\cite{Ferreira2006,Lamma2008,Goedertier2009} are among the few that use
domain knowledge in terms of
\emph{negative information}, expressed by traces which do not represent process behavior.
In this paper we follow 
this direction, but additionally incorporate a crucial information to be used for the task of process discovery: when a
pair of activities are {\em independent} of each other. One example could be the
different tests that a patient should
undergo in order to have a diagnosis: blood test, allergy test, and radiology
test, which are independent each other.
We believe that obtaining this coarse-grain independence information from a domain expert is an easy and natural step;
however, if they are not available, one can estimate them from analysing the log with some of the techniques in the
literature, e.g., the relations computed by the $\alpha$-algorithm~\cite{AalstBook}.

\begin{figure}[t]
\begin{centering}
\includegraphics[scale=0.6]{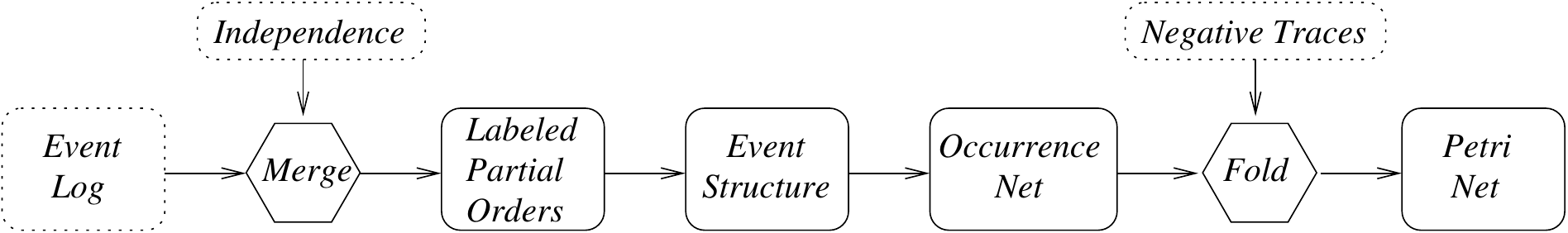}
\caption{\label{fig:flow} Unfolding-based process discovery.}
\end{centering}
\end{figure}

The approach of this paper is summarized in \autoref{fig:flow}.
Starting from an event log and an independence
relation on its set of activities, we conceptually construct
a collection of \emph{labeled partial orders} whose
linearizations include both the sequences in the log as well those in the same
Mazurkiewicz trace~\cite{Mazurkiewicz86},
\ie, those obtained via successive permutations of independent activities.
We then merge (the common prefixes of) this collection into an
\emph{event structure} which we next transform into an occurrence net
representing the same behavior. Finally, we perform a controlled generalization
by selectively folding the occurrence net into a Petri net.
This step yields a net that
(a) can execute all traces contained in the event log, and
(b) generalizes the behavior of the log in a controlled manner, introducing no
execution given in the collection of negative traces.
The folding process is driven by a \emph{folding equivalence relation}, which we
synthesize using SMT.
Different folding equivalences guarantee different properties about the final net.
The paper proposes three different classes of equivalences and
studies their properties.
In particular we define a class of \emph{independence-preserving} folding
equivalences, guaranteeing that the natural independence relation in the final
net will equal the one given by the expert.

In summary, the main contributions of the paper are:
\begin{itemize}
\item
  A general and efficient translation from prime event structures to occurrence
  nets 
  (\autoref{sec:discovery}).
\item
  Three classes of folding equivalences of interest not only in process
  discovery but also in formal verification of concurrent systems
  (\autoref{sec:generalization}).
\item
  A method to synthesize folding equivalences using SMT
  (\autoref{sec:computing}).
\item
  An implementation of our approach and experimental results witnessing its
  capacity to even rediscover the original model
  (\autoref{sec:experiments}).
\end{itemize}

Remarkably, the discovery technique of this paper solves for the first time one of the foreseen operations 
in~\cite{DumasG15},
which advocates for the unified use of event structures to support process mining operations.


\section{Preliminaries}
\label{sec:prelim}


\paragraph{Events:}
given an alphabet of actions $A$, several occurrences of a given action can
happen on a run or execution.
In this paper we consider a set~$E$ of events
representing the occurrence of
actions in executions. Each event $e \in E$ has the form $e\eqdef\tup{a,H}$,
where $a \in A$ and~$H \subseteq E$ is a subset of events causing~$e$ (its history).
The label of an event is given by a function $\lambda \colon E \to A$ defined as
$\lambda(\langle a,H \rangle) \eqdef a$.

\paragraph{Labeled partial orders (lpos):}
we represent a labelled partial order by a pair $(E, \le)$, where ${\le} \subseteq E \times E$ is a reflexive, antisymmetric and transitive
relation on the set~$E$ of events.
Two distinct
events $e,e' \in E$ can be either ordered ($e \le e'$ or $e' \le e$) or concurrent ($e \not\le e'$ and $e' \not\le e$).
Observe that all events are implicitly labelled by $\lambda$.

\paragraph{Petri nets:}

a net consists of two disjoint sets $P$ and $T$ representing respectively places
and transitions together with a set $F$ of flow arcs. The notion of state of the
system in a net is captured by its markings.
A marking is a multiset $M$ of
places, \ie, a map $M \colon P \to \nat$.
We focus on the so-called safe nets, where markings are sets, \ie, $M(p) \in \{0,1\}$
for all $p \in P$. A Petri net (PN) is a net together with an initial marking and a total function that labels its
transitions over an alphabet $A$ of observable actions. Formally a PN is a tuple $\pnetdef$ where \emph{(i)} $P \not =
\emptyset$ is a set of places; \emph{(ii)} $T \not = \emptyset$ is a set of transitions such that $P \cap T =
\emptyset$; \emph{(iii)} $F \subseteq (P \times T) \cup (T \times P)$ is a set
of flow arcs; \emph{(iv)} $\lambda \colon T \to A$ is a labeling mapping;
and \emph{(v)} $M_0 \subseteq P$ is an initial marking. Elements of $P \cup T$ are called the
nodes of $\pnet$. For a transition  $t \in T$, we call $\preset t \eqdef \{ p \mid (p,t) \in 
F \}$ the preset of $t$, and $\postset t \eqdef \{p \mid (t,p) \in F \}$ the postset of $t$. In figures, we represent as
usual places by empty circles, transitions by squares, $F$ by arrows, and the marking of a place $p$ by  black tokens in
$p$. A transition $t$ is enabled in marking $M$, written $M \arrow t$, iff $\preset t \subseteq M$. This enabled
transition can fire, resulting in a new marking $M' \eqdef (M \backslash \preset t) \cup \postset t$. This firing relation is
denoted by $M \arrow t M'$. A marking $M$ is reachable from $M_0$ if there exists a firing sequence, i.e.  transitions,
$t_1, \dots, t_n$ such that $M_0 \arrow{t_1} \dots \arrow{t_n} M$. The set of reachable markings from $M_0$ is denoted by
$\reach \pnet$. The set of co-enabled transitions is
${\coe \pnet} \eqdef
\{ (t,t') \mid \exists M \in \reach \pnet \colon \preset t \subseteq M \land \preset{t'} \subseteq M \}$.
The set of observations of a net
is the image over $\lambda$ of its
fireable sequences, \ie,
$\sigma \in \obs \pnet$ iff $M_0 \arrow{t_1} \dots \arrow{t_n} M$ and
$\lambda(t_1) \dots \lambda(t_n) = \sigma$.

\paragraph{Occurrence nets:}

occurrence nets can be seen as infinite Petri nets with a special acyclic
structure that highlights conflict between transitions that compete for
resources. Places and transitions of an occurrence net are usually called
conditions and events. Formally, let $\netdef$ be a net,  $<$  the transitive
closure of $F$, and $\leq$ the reflexive closure of $<$. We say that transitions
$t_1$ and $t_2$ are in structural conflict,
written $t_1 \cfl[s] t_2$, if and only
if $t_1 \not = t_2$ and $\preset{t_1} \cap \preset{t_2} \not = \emptyset$.
Conflict is inherited along $<$, that is, the conflict relation $\cfl$ is given by
$a \cfl b \Leftrightarrow \exists t_a,t_b \in T \colon t_a \cfl[s] t_b \land t_a
\leq a \land t_b \leq b$. Finally, the concurrency relation $\bf co$ holds
between nodes $a,b \in P \cup T$ that are neither ordered nor in conflict, i.e.
$a\co b \Leftrightarrow \neg (a \leq b) \land \neg (a \cfl b) \land \neg (b \leq a)$.

A net $\ondef$ is an occurrence net iff \emph{(i)} $\leq$ is a partial order; \emph{(ii)} for all $b \in B$, $\lvert
\preset b \rvert \in \{0,1\}$; \emph{(iii)} for all $x\in B \cup E$, the set $[x] := \{y \in E \mid y \leq x\}$ is finite;
\emph{(iv)} there is no self-conflict, i.e. there is no $x \in B \cup E$ such
that $x\cfl x$. The initial marking $M_0$ of an occurrence net is the set of
conditions with an empty preset, i.e. $\forall b \in B\colon b \in M_0
\Leftrightarrow \preset b = \emptyset$. Every $\leq$-closed and conflict-free
set of events $C$ is called a configuration and generates a reachable marking defined as
$\marking C \eqdef (M_0 \cup \postset C) \setminus \preset C$.
We also assume a labeling
function $\lambda \colon E \to A$ from events in $\beta$ to alphabet $A$. 
Conditions are of the form $\langle e , X \rangle$ where $e \in E$ is the event generating the condition and $X
\subseteq E$ are the events consuming it. Occurrence nets are the mathematical form of the partial order unfolding
semantics of a Petri net~\cite{EsparzaRV02}; we use indifferently the terms occurrence net and unfolding.

\iftoggle{long}{
\begin{lemma}
\label{lem:cocond}
  Let $\on$ be an occurrence net such that $(e,e') \in {\coe \on}$, then for every $b \in \preset e$ and $b' \in \preset{e'}$ we have $b=b'$ or $b \co b'$.
\end{lemma}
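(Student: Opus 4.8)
The plan is to reduce the statement to the standard fact that every reachable marking of an occurrence net is a \emph{co-set}, \ie, a set of pairwise concurrent conditions. By definition of $\coe \on$, there is a marking $M \in \reach \on$ with $\preset e \subseteq M$ and $\preset{e'} \subseteq M$; and, by the usual correspondence between reachable markings and configurations in unfoldings (a simple induction on a firing sequence reaching $M$, \cf \cite{EsparzaRV02}), $M = \marking C$ for some configuration $C$. Since $b \in \preset e \subseteq M$ and $b' \in \preset{e'} \subseteq M$, it suffices to show that any two conditions $b, b' \in \marking C$ satisfy $b = b'$ or $b \co b'$. So fix such $b, b'$ with $b \ne b'$; by definition of $\co$, we must rule out $b < b'$, $b' < b$, and $b \cfl b'$.

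\emph{No causal ordering.} Suppose $b < b'$ (the case $b' < b$ is symmetric). Then $\preset{b'} \ne \emptyset$, so $b' \notin M_0$ and hence $b' \in \postset C$; let $f$ be the unique (by condition (ii) of occurrence nets) element of $\preset{b'}$. From $b' \in \postset C$ we obtain $f \in C$, and from $b < b'$ we obtain $b < f$. Let $g$ be the first event on a directed path from $b$ to $f$; then $g \in \postset b$ and $g \le f$, so $g \in C$ because $C$ is $\le$-closed and $f \in C$. But $g \in \postset b$ means $b \in \preset g$, hence $b \in \preset C$ and $b \notin (M_0 \cup \postset C) \setminus \preset C = \marking C$, a contradiction.

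\emph{No conflict.} Since every condition has at most one producing event, any event $t$ with $t \le b$ satisfies $t \le h$, where $\{h\} = \preset b$; combining $b \in \postset C$ with the $\le$-closedness of $C$ as above, all events below $b$ lie in $C$ (and if $b \in M_0$ there are none, as $\preset b = \emptyset$). The same holds for $b'$. If $b \cfl b'$ held, there would be events $t_a \le b$ and $t_b \le b'$ with $t_a \cfl[s] t_b$; then $t_a, t_b \in C$ and $t_a \cfl t_b$, contradicting the conflict-freeness of the configuration $C$. Together with the previous paragraph this gives $b \co b'$ whenever $b \ne b'$, which is the claim.

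The genuinely delicate point is the bookkeeping in these two case analyses: showing that the causal predecessors (events) of a condition appearing in a reachable marking are already contained in the generating configuration $C$, and handling separately the conditions of $M_0$, whose empty preset makes the conflict case vacuous. Everything else is a routine unwinding of the definitions of configuration, $\marking C$, the inheritance of conflict along $<$, and the acyclicity of occurrence nets.
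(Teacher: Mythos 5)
Your proof is correct. The paper's own proof of this lemma is literally the single word ``Routine,'' so there is no detailed argument to compare against; your reduction to the standard fact that every reachable marking of an occurrence net is a co-set --- via the correspondence $M = \marking C$ for a $\le$-closed, conflict-free $C$, then ruling out causal order (the consumed-condition contradiction) and conflict (pulling the structurally conflicting events $t_a, t_b$ into $C$), with the $M_0$ edge cases handled --- is precisely the textbook argument the authors are alluding to, and every step checks out against the paper's definitions.
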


\begin{proof}
Routine.
\end{proof}
}

Conditions in an occurrence net can be removed by keeping the causal
dependencies and introducing a conflict relation; the obtained object is an event structure~\cite{NielsenPW81}.

\paragraph{Event structures:}

an event structure is a tuple $\lesdef$ where $E$ is a set of events; $\leq\ \subseteq E \times E$ is a partial
order (called causality) satisfying the property of finite causes, i.e. $\forall e \in E : \lvert [e] \rvert < \infty$ where $[e] := \{ e' \in E \mid e' \leq e
\}$; ${\cfl} \subseteq E \times E$ is an irreflexive symmetric relation (called conflict) satisfying the property
of conflict heredity, i.e. $\forall e,e',e'' \in E : e \cfl e' \land e' \leq e''
\Rightarrow e \cfl e''$. Note that in most cases one only needs to consider
reduced versions of relations $\leq$ and $\cfl$, which we will denote $\leqi$ and
$\dcfl$, respectively. Formally, $\leqi$ (which we call direct causality) is the
transitive reduction of $\leq$, and $\dcfl$ (direct conflict) is the smallest
relation inducing $\cfl$ through the property of conflict heredity. A
configuration is a computation state represented by a set of events that have occurred; 
if an event is present in a configuration, then so must all the events on which it causally depends.
Moreover, a configuration does not contain conflicting events. Formally, a
configuration of $(E,{\leq},{\cfl})$ is a set
$C \subseteq E$ such that $e \in C \Rightarrow (\forall e' \leq e : e' \in C)$,
and $(e \in C \land e \cfl e') \Rightarrow e'\not\in C$.
The set of 
configurations of~$\les$ is denoted by~$\Omega(\les)$.

\paragraph{Mazurkiewicz traces:}

let $A$ be a finite alphabet of letters and $\ind \subseteq A \times A$ a symmetric and irreflexive relation called
independence. The relation $\ind$ induces an equivalence relation $\equiv_\ind$ over
$A^*$. Two words $\sigma$ and $\sigma'$ are equivalent ($\sigma \equiv_\ind \sigma'$) if there exists a sequence
$\sigma_1 \dots \sigma_k$ of words such that $\sigma=\sigma_1, \sigma'=\sigma_k$ and for all $1\leq i \leq k$ there
exists words $\sigma_i', \sigma_i''$ and letters $a_i,b_i$ satisfying $$\sigma_i=\sigma_i' a_i b_i \sigma_i'',
\hspace{5mm} \sigma_{i+1}=\sigma_i' b_i a_i \sigma_i'', \hspace{4mm} \text{and } (a_i,b_i) \in \ind$$ Thus, two words
are equivalent by $\equiv_\ind$ if one can be obtained from the other by successive commutation of neighboring
independent letters. For a word $\sigma \in A^*$ the equivalence class of $\sigma$ under $\equiv_\ind$ is called a
Mazurkiewicz trace~\cite{Mazurkiewicz86}.
\bigskip 

We now describe the problem tackled in this paper, one of the main challenges in the {\em process mining}
field~\cite{AalstBook}.

\paragraph{Process Discovery:}

a log ${\logs}$ is a finite set of traces over an alphabet $A$ representing the footprints of the real process
executions of a system $\sys$ that is only (partially) visible through these runs. Process discovery techniques aim
at extracting from a log ${\logs}$ a process model ${\model}$ (e.g., a Petri net) with the goal to elicit the process
underlying in ${\sys}$. By relating the behaviors of ${\logs}$, $\obs {\model}$ and ${\sys}$,
particular concepts can be defined~\cite{BuijsDA14}. A log is \emph{incomplete} if ${\sys} \backslash {\logs} \ne
\emptyset$. A model ${\model}$ \emph{fits} log ${\logs}$ if ${\logs} \subseteq \obs {\model}$. A model is
\emph{precise} in describing a log ${\logs}$ if $\obs {\model} \backslash {\logs}$ is small. A model ${\model}$
represents a \emph{generalization} of log ${\logs}$ with respect to system ${\sys}$ if some behavior in ${\sys}
\backslash {\logs}$ exists in $\obs {\model}$. Finally, a model ${\model}$ is \emph{simple} when it has the minimal
complexity in representing $\obs {\model}$, i.e., the well-known \emph{Occam's razor principle}.
It is widely acknowledged that the size of a process model is the
most important simplicity indicator. Let ${\cal U}^{\cal N}$ be the universe
of nets, we define a function $\hat c: {\cal U}^{\cal N} \to \mathbb{N}$ to
measure the simplicity of a net by counting the number of some of its elements, \eg, its
transitions and/or places.

\section{Independence-Preserving Discovery}
\label{sec:discovery}

Let $\sys$ be a system whose set of actions is~$A$.
Given two actions $a,b \in A$ and one state~$s$ of~$\sys$,
we say that~$a$ and~$b$ \emph{commute} at~$s$ when
\begin{itemize}
\item
  if $a$ can fire at $s$ and its execution reaches state~$s'$,
  then $b$ is possible at~$s$ iff it is possible at~$s'$; and
\item
  if both $a$ and $b$ can fire at~$s$, then
  firing $ab$ and $ba$ reaches the same state.
\end{itemize}
Commutativity of actions at states identifies an equivalence relation in the
set of executions of the system $\sys$; 
it is a \emph{ternary} relation, relating two transitions with one state.

Since asking the expert to provide the commutativity relation of~$\sys$
would be difficult, we restrict ourselves to
unconditional independence, \ie,
a conservative overapproximation of the commutativity relation that is a sole
property of transitions, as opposed to transitions and states.
An \emph{unconditional independence} relation of~$\sys$
is any \emph{binary}, symmetric, and irreflexive relation~$\ind \subseteq A \times A$
satisfying that if~$a \ind b$ then~$a$ and~$b$ commute at
\emph{every reachable state} of~$\sys$.
If $a, b$ are not independent according to $\ind$,
then they are dependent, denoted by $a \dep b$.

In this section,
given a log~$\logs \subseteq A^*$, representing some behaviors of~$\sys$,
and an arbitrary unconditional independence~$\ind$ of~$\sys$,
provided by the expert, we construct an occurrence net
whose executions contain $\logs$ together with
all sequences in $A^*$ which are $\equiv_{\ind}$-equivalent to some sequence
in~$\logs$.

If commuting actions are not declared independent by the expert
(\ie, $\ind$~is smaller than it could be),
then $\model$ will be more sequential than $\sys$;
if some actions that did not commute are marked as independent,
then $\model$ will not be a truthful representation of $\sys$.
The use of expert knowledge in terms of an independence relation is a novel
feature not considered before in the context of process discovery. We believe
this is a powerful way to fight
with the problem of log incompleteness in a practical way since it is only needed to observe in the log one
trace representative of a class in $\equiv_\ind$ to include the whole set of traces of the class in the process model's
executions.

Our final goal is to generate a Petri net that represents the behavior of the
underlying system. We start by translating~$\logs$ into a collection of partial
orders whose shape depends on the specific definition of~$\ind$.

\begin{definition}
\label{def:log2lpo}
  Given a sequence $\sigma \in A^*$ and an independence relation $\ind \subseteq
  A \times A$, we associate to~$\sigma$ a labeled partial order $\lpo \sigma$
  inductively defined by:
  \begin{enumerate}
    \item
      If $\sigma = \varepsilon$, then let $\bot \eqdef \tup{\tau, \emptyset}$
      and set $\lpo \sigma \eqdef (\set \bot, \emptyset)$.
    \item
      If $\sigma = \sigma' a$,
      then let $\lpo{\sigma'} \eqdef (E',\leq')$ and
      let $e \eqdef \tup{a, H}$ be the single event such that~$H$ is the unique
      $\subseteq$-minimal, causally-closed set of events in~$E'$
      satisfying that for any event $e' \in E'$,
      if $\lambda(e') \dep a$, then $e' \in H$.
      Then set $\lpo \sigma \eqdef (E,\leq)$ with $E \eqdef E' \cup \{ e \}$ and
      ${\leq} \eqdef {\leq'} \cup (H \times \{ e \})$.
  \end{enumerate}
\end{definition}

Since a system rarely generates a single observation, we need a compact way to model all the possible observations of the system. We represent all the partially ordered executions of a system with an event structure.

\begin{definition}
\label{def:lpo2es}
  Given a set of partial orders $ S \eqdef \{ (E_i,\leq_i) \mid 1 \leq i \leq n \}$,
  we define $\es S \eqdef (E,\leq, \cfl)$
where:
  \begin{enumerate}
    \item $E \eqdef \bigcup\limits_{1 \leq i \leq n} E_i$,
    \item ${\leq} \eqdef (\bigcup\limits_{1 \leq i \leq n} \leq_i)^*$, and
    \item
      for $e \eqdef \langle a,H \rangle$ and $e' \eqdef \langle b,H' \rangle$, we have that
      $e \dcfl e'$ (read: $e$ and $e'$ are in direct conflict) iff
      $e' \not \in H, e \not \in H'$ and $a \dep b$.
      The conflict relation  $\cfl$ is the
      smallest relation that includes $\dcfl$ and is inherited \wrt $\leq$,
      \ie, for $e \cfl e'$ and $e \leq f$, $e' \leq f'$, one has $f \cfl f'$.    
\end{enumerate}
\end{definition}

\iftoggle{long}{
\begin{lemma}
\label{le:lpo2es}
$\es S\eqdef (E,\leq, \cfl)$ from \autoref{def:lpo2es} is an event structure.
\end{lemma}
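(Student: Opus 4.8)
The plan is to verify, in turn, the three requirements that \autoref{def:lpo2es} must meet to yield an event structure for $\es S = (E,{\leq},{\cfl})$: that $\leq$ is a partial order with the finite-causes property, that $\cfl$ is irreflexive and symmetric, and that $\cfl$ is inherited along $\leq$. Two of these come almost for free. Reflexivity and transitivity of $\leq$ are immediate, since $\leq \eqdef (\bigcup_i {\le_i})^*$ is a reflexive--transitive closure. Symmetry of $\cfl$ follows because the three conditions defining $\dcfl$ are symmetric in $e,e'$ ($\dep$ being symmetric, as the complement of the symmetric $\ind$), so $\dcfl$ is symmetric, and the heredity-closure rule used to build $\cfl$ from $\dcfl$ preserves symmetry; heredity itself holds by construction, $\cfl$ being declared the smallest heredity-closed relation containing $\dcfl$. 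So the real content lies in antisymmetry of $\leq$, the finite-causes property, and irreflexivity of $\cfl$.

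For the first two I would establish a single invariant by induction on the inductive construction of $\lpo\sigma$ in \autoref{def:log2lpo}: \emph{in every $\lpo\sigma = (E',{\le'})$ and every event $e = \tup{a,H} \in E'$, the causal past $[e]$ equals $H \cup \{e\}$}; along the way one also records that whenever an event lies in some $E_i$ its whole history does too, and that histories are $\le_i$-closed there. The step case is clean: appending $e_\sigma = \tup{a,H}$ introduces only the pairs $H \times \{e_\sigma\}$, and since $H$ is already $\le'$-closed no additional transitive pairs arise. To transfer this to $\es S$, I would chase a chain $x_0 \le_{i_1} x_1 \le_{i_2} \cdots \le_{i_m} e'$ in $\bigcup_i {\le_i}$, using that the history component $H$ of an event is part of its identity (hence the same in every $E_i$ containing it) and is $\le_i$-closed there; this gives $[e'] = H_{e'}\cup\{e'\}$ in $\es S$ as well. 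Since each $H_{e'}$, being the history of an event produced by \autoref{def:log2lpo} from a finite sequence, is finite, the finite-causes property follows. Antisymmetry is then immediate: a cycle $e \le e' \le e$ with $e \ne e'$ would force $e \in H_{e'}$ and $e' \in H_e$ simultaneously, impossible because in \autoref{def:log2lpo} each event is created only after all events of its history, so the relation ``$h \in H_e$'' is well-founded.

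The remaining point, and the one I expect to be the main obstacle, is irreflexivity of $\cfl$: even granting that $\dcfl$ relates only distinct events, one must rule out the heredity closure producing a self-conflict $f \cfl f$. The key is to trace any such $f \cfl f$ back to the direct conflict that generated it, obtaining distinct events $e \dcfl e'$ with $e \le f$ and $e' \le f$, hence $e, e' \in H_f$ by the causality invariant, with $H_f \subseteq E_i$ for some $i$. But in $\lpo{\sigma_i}$ the events $e$ and $e'$ cannot be concurrent: whichever of the two is introduced first, its label being dependent with the other's (since $e \dcfl e'$ yields $\lambda(e) \dep \lambda(e')$), must enter the history of the later one by the saturating clause of \autoref{def:log2lpo}; and either $e \in H_{e'}$ or $e' \in H_e$ contradicts $e \dcfl e'$, which demands both exclusions. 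Hence no self-conflict arises, $\cfl$ is irreflexive, and all three conditions hold. The genuinely load-bearing ingredients are thus the invariant $[e] = H \cup \{e\}$ and the history-saturation clause of \autoref{def:log2lpo}; the rest is bookkeeping.
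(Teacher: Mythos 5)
Your argument is correct and is built on the same key observation that the paper's own proof leans on --- an event's identity $\tup{a,H}$ determines its causal past, so $[e] = H \cup \set e$ transfers from the individual lpos to $\es S$ --- but the paper's proof is only a two-sentence sketch: it asserts that $\leq$ is a partial order ``by the Kleene closure'' because violations would contradict some lpo in $S$, notes that $\dcfl$ is symmetric since $\dep$ is, and stops there. It says nothing about the finite-causes property and, more importantly, nothing about irreflexivity of $\cfl$. Your treatment of the latter --- tracing a hypothetical self-conflict $f \cfl f$ back to a generating pair $e \dcfl e'$ with $e,e' \in H_f \subseteq E_i$, then using the history-saturation clause of \autoref{def:log2lpo} to show that whichever of $e,e'$ is created first in $\lpo{\sigma_i}$ must enter the other's history (their labels being dependent), contradicting $e \dcfl e'$ --- is precisely the load-bearing step the paper omits, and it is where the lemma could actually fail if the construction were wrong. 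One caveat worth flagging: \autoref{def:lpo2es}.3 as literally written does not require $e \neq e'$, and since $\ind$ is irreflexive, $\dep$ is reflexive while $e \notin H_e$ always holds, so a literal reading makes $\dcfl$ reflexive. Your parenthetical ``granting that $\dcfl$ relates only distinct events'' silently repairs this; that is clearly the intended reading, but the gap lies in the paper's definition rather than in your proof.
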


\begin{proof}
Clearly, $\leq$ is reflexive, antisymmetric and transitive by the Kleene closure: violations to these properties on
$\leq$ will contradict the corresponding properties in some lpo in $S$ since every event is characterized by the
set of their causal events.
Now, the definition of $e \dcfl e'$ is clearly symmetric in the roles of $e$ and
$e'$ since $\dep$ is a symmetric relation; symmetry is also inherited under $\leq$. 
\end{proof}

\begin{lemma}
  $\es S\eqdef (E,\leq, \cfl)$ from \autoref{def:lpo2es} is unique.
\end{lemma}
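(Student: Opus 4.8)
The plan is to observe that each of the three components of $\es S$ is a deterministic function of the set $S$, whence $\es S$ is well defined. Two of them are immediate: $E = \bigcup_{1 \le i \le n} E_i$ is just the union of the carrier sets, and $\leq = (\bigcup_{1 \le i \le n} \leq_i)^*$ is the reflexive--transitive closure of the union of the orders; neither depends on how $S$ is enumerated. The direct-conflict relation $\dcfl$ is likewise fixed once $S$ and $\dep$ are fixed, since for $e = \langle a,H\rangle$ and $e' = \langle b,H'\rangle$ the condition ``$e'\notin H$, $e\notin H'$ and $a \dep b$'' refers only to the (canonical) histories of the events and to $\dep$. Hence the only part of \autoref{def:lpo2es} that is not literally a closed-form expression is $\cfl$, which is specified as \emph{the smallest} relation that contains $\dcfl$ and is inherited \wrt $\leq$; the content of the lemma is precisely that such a smallest relation exists and is therefore unique.

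To establish this I would consider the family $\mathcal{F} \eqdef \{\, R \subseteq E \times E \mid \dcfl \subseteq R \text{ and } R \text{ is inherited \wrt } \leq \,\}$, where ``$R$ inherited \wrt $\leq$'' means that $e \mathrel{R} e'$, $e \le f$ and $e' \le f'$ imply $f \mathrel{R} f'$. First I would note that $\mathcal{F} \neq \emptyset$, since $E \times E \in \mathcal{F}$. Then I would check that $\mathcal{F}$ is closed under arbitrary intersection: $\bigcap\mathcal{F}$ contains $\dcfl$ because every member of $\mathcal{F}$ does, and it is inherited \wrt $\leq$ because whenever $e \mathrel{(\bigcap\mathcal{F})} e'$ with $e \le f$ and $e' \le f'$ one has $f \mathrel{R} f'$ for every $R \in \mathcal{F}$, hence $f \mathrel{(\bigcap\mathcal{F})} f'$. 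Thus $\bigcap\mathcal{F}$ is the (unique) $\subseteq$-least element of $\mathcal{F}$, which is exactly the relation $\cfl$ prescribed by the definition; consequently $\cfl$, and with it $\es S$, is uniquely determined by~$S$.

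I expect the argument to be essentially bookkeeping, with no real obstacle; the one point worth a line of care is that ``inherited \wrt $\leq$'' is stated in a two-sided form, with both endpoints allowed to move along $\leq$, but this is harmless since the condition is a universally quantified implication that can be verified relation by relation, which is all the intersection argument needs. If a manifestly deterministic description is preferred, $\cfl$ can instead be built bottom-up by starting from $\dcfl$ and repeatedly adding every pair $(f,f')$ with $e \le f$ and $e' \le f'$ for an already-included pair $(e,e')$, then taking the union of all stages; this depends only on $\dcfl$ and $\leq$ and yields the same relation.
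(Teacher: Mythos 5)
Your proof is correct and follows essentially the same route as the paper's: both argue component by component that $E$, $\leq$, $\dcfl$, and finally $\cfl$ are each determined by $S$. The only difference is that you actually justify the existence of the \emph{smallest} relation containing $\dcfl$ and inherited \wrt $\leq$ (via the nonempty, intersection-closed family $\mathcal{F}$), a step the paper merely asserts; this is a welcome bit of added rigor but not a different argument.
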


\begin{proof}
The set $E$ and the relation $\leq$ are clearly unique since they are defined from the union and Kleene closure
operators, respectively, which derive unique results. Now, $\dcfl$ is unique since its definition in
\autoref{def:lpo2es} is based on removing all the causality from $\dep$, and hence only one possible relation is
obtained for $\dcfl$. By taking the smallest relation including $\dcfl$ that is inherited with respect $\leq$, again only
one possible relation is obtained for $\cfl$.
\end{proof}
}

Given a 
set of finite partial orders $S$, we now show that $S$ 
is included in the 
configurations of the event structure obtained by \autoref{def:lpo2es}. 
This means that our event structure is a fitting
representation of~$\logs$.

\begin{proposition}
\label{lemma:fit}
If $S$
is finite, then $S \subseteq \Omega(\es S)$.
\end{proposition}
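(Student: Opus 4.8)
The plan is to show that, for every $i$, the event set $E_i$ of the $i$-th partial order is a configuration of $\es S = (E,\leq,\cfl)$: a configuration is precisely a $\leq$-downward-closed and conflict-free subset of $E$, so this is exactly what $S \subseteq \Omega(\es S)$ says, once we read $S$ as the family of labelled partial orders $\lpo{\sigma_i}$ obtained from the traces of $\logs$ through \autoref{def:log2lpo} (we shall use the concrete construction of these lpos, not merely that they are posets). Write each event as $e = \tup{a,H_e}$; note that $e \in E_i$ implies $H_e \subseteq E_i$, since $e$ is introduced at a step of the construction of $\lpo{\sigma_i}$ at which every previously created event — in particular every event of $H_e$ — already lies in $E_i$.

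The first and main step is an invariant: for all $i$ and all $e \in E_i$, the $\leq_i$-downward closure of $e$ in $\lpo{\sigma_i}$ equals $H_e \cup \set e$. This is proved by induction along the construction of \autoref{def:log2lpo}: when $e$ is created the only arcs added below it are $H_e \times \set e$, so — since $H_e$ is causally closed in the preceding lpo — its downward closure is then $H_e \cup \set e$, and every later step only adds arcs $H \times \set{e'}$ into a fresh event $e'$, hence adds nothing below $e$. In particular the ``past'' $H_e \cup \set e$ of $e$ is one and the same set in every $\lpo{\sigma_i}$ that contains $e$; this cross-lpo consistency is exactly what letting events carry their history buys us.

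For causal closure of $E_i$, take $e \in E_i$ and $f \leq e$ in $\es S$. Since $\leq = (\bigcup_j \leq_j)^*$, fix a chain $f = f_0, f_1, \dots, f_k = e$ with $(f_{\ell-1},f_\ell)\in{\leq_{j_\ell}}$ for each $\ell$, and prove by downward induction on $\ell$ that $f_\ell \in E_i$ and $f_\ell \leq_i e$. The case $\ell=k$ is immediate. For the step, $f_{\ell-1} \leq_{j_\ell} f_\ell$ forces $f_{\ell-1},f_\ell \in E_{j_\ell}$, so by the invariant $f_{\ell-1} \in H_{f_\ell}\cup\set{f_\ell}$; but $f_\ell \in E_i$ by the inductive hypothesis, so that set is also the $\leq_i$-past of $f_\ell$ and is $\subseteq E_i$, whence $f_{\ell-1} \in E_i$ and $f_{\ell-1} \leq_i f_\ell \leq_i e$. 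Taking $\ell = 0$ gives $f \in E_i$, so $E_i$ is $\leq$-downward closed.

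Finally, conflict-freeness. First note property (P): in $\lpo{\sigma_i}$, two distinct events $g,g' \in E_i$ with $\lambda(g)\dep\lambda(g')$ are $\leq_i$-comparable — distinct events are created at distinct steps of the construction, so if $g$ is created earlier then, when $g'$ is created, its history must contain every already-present event whose label depends on $\lambda(g')$, in particular $g$, giving $g \leq_i g'$. Now suppose some $e,e' \in E_i$ satisfy $e \cfl e'$. Unfolding $\cfl$ as the $\leq$-inherited closure of $\dcfl$, there are $g,g'$ with $g \dcfl g'$, $g \leq e$ and $g' \leq e'$, and necessarily $g \ne g'$ since $\cfl$ is irreflexive; by the causal-closure step, $g,g' \in E_i$. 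From $g \dcfl g'$ we get $\lambda(g)\dep\lambda(g')$, $g' \notin H_g$ and $g \notin H_{g'}$, while (P) together with the invariant gives $g \in H_{g'}$ or $g' \in H_g$ — a contradiction. Hence $E_i$ is conflict-free and $E_i \in \Omega(\es S)$. The delicate point throughout is the invariant and the consistency of pasts it provides; I would also stress that the statement really uses that $S$ consists of $\lpo{\sigma_i}$'s, since for arbitrary labelled partial orders (P) can fail and then $E_i$ need not be conflict-free in $\es S$.
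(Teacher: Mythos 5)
Your proof is correct and follows the same route as the paper's: it shows that each $E_i$ is $\leq$-downward closed and conflict-free in $\es S$, hence a configuration. The paper's own argument is only a three-sentence sketch of exactly this, and your invariant on pasts together with property (P) supplies the details (cross-lpo consistency of histories, comparability of dependent events) that it leaves implicit.
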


\iftoggle{long}{
\begin{proof}
By \autoref{def:lpo2es}.1, all the events of the partial orders are part of the event structure. Let $(E_i,\leq_i) \in S$, clearly $E_i$ is casually closed in $\es S$ (\autoref{def:lpo2es}.2). By \autoref{def:lpo2es}.3, any event in conflict with some event $e \in E_i$ is not in its past; we can conclude that $E_i$ is conflict free and therefore a configuration.
%
\qed
\end{proof}
}

Since we want to produce a Petri net, we now need to
``\emph{attach conditions}'' to the result of \autoref{def:lpo2es}.
Event structures and occurrence nets are conceptually very similar
objects so this might seem very easy for the acquainted reader.
However, this definition is crucial for the success of the subsequent folding
step (\autoref{sec:generalization}), as we will be constrained to merge
conditions in the preset and postset of an event when we merge the event.
As a result, the conditions that we produce now should constraint as little as
possible the future folding step.

\begin{definition}
\label{def:es2on}
  Given an event structure $\lesdef$ we construct the occurrence net $\on \eqdef
  (B,E \backslash \{ \bot \},F)$ in two steps
\begin{enumerate}
  \item Let $G \eqdef (V,A)$ be a graph where $V \eqdef E$ and $(e_1,e_2) \in A$ iff $e_1
  \dcfl e_2$. For each clique (maximal complete subgraph) $K \eqdef \{e_1, \dots, e_n \}$ of $G$, let $C_K \eqdef [e_1] \cap \dots \cap [e_n]$ and $e_K \in \max (C_K)$. We add a condition $b$ to $B$ and set $b \in \postset{e_K}$ and $b \in \preset{e_i}$ for $i = 1 \dots n$.
  \item For each $e \in E$, let $G_e \eqdef (V_e,A_e)$ be a graph where $V_e \eqdef \{ e' \in E \mid e \leqi e' \}$ and $(e_1,e_2) \in A_e$ iff $\lambda(e_1) \dep \lambda(e_2)$. For each clique $K_e := \{e_1, \dots, e_n \}$ of $G_e$, we add a condition $b$ to $B$ and set $b \in \postset{e}$ and $b \in \preset{e_i}$ for $i = 1 \dots n$.
\end{enumerate}
\end{definition}

\autoref{def:es2on}.1. adds a condition for every set of pairwise direct conflicting events; the condition is generated by some event $e_K$ which is in the past of every conflicting event and consumed by all of them; by the latter the conflict of the event structure is preserved in the occurrence net. For each event and its immediate successors, \autoref{def:es2on}.2. adds conditions between them to preserve causality. To minimize the number of conditions, for the successor events having dependent labels only one condition is generated. This step does not introduce new conflicts in the occurrence net since the events have dependent labels and none is in the past of the other, then by \autoref{def:lpo2es} they are also in conflict in the event structure.

We note that Winskel already explained, in categorical terms,
how to relate an event structure with an occurrence net~\cite{Winskel84a}.
However, his definition is of interest only in that context,
while ours focus on a practical and efficient translation.


Given a log $\logs$ and an independence relation $\ind$, the net obtained
applying Definitions \ref{def:log2lpo}, \ref{def:lpo2es} and \ref{def:es2on}, in
this order, is denoted by $\onl$.
Since every trace in $\logs$ is a linearization of some of
the partial orders in the set $S$ obtained by \autoref{def:log2lpo} and 
these partial orders are included by \autoref{lemma:fit} in the configurations of $\es S$ (which are
the same as the configurations in $\onl$), the
obtained net is fitting.

\begin{proposition}
\label{prop:fit}
  Let $\logs$ be a log and $\ind$ an independence relation, for every $\sigma
  \in \logs$ we have $\sigma \in \obs \onl$.
\end{proposition}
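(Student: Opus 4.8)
The plan is to chain together the three constructions in the order they were defined, showing that a log trace survives each step as an observable behavior. Fix $\logs$, fix $\ind$, and take an arbitrary $\sigma \in \logs$. First I would invoke \autoref{def:log2lpo} to obtain the labeled partial order $\lpo\sigma = (E_\sigma, \leq_\sigma)$ associated to $\sigma$, and record the key fact that $\sigma$ is a linearization of $\lpo\sigma$: this should follow by a straightforward induction on the length of $\sigma$, since step~2 of \autoref{def:log2lpo} only adds the new event $e = \tup{a,H}$ above a causally-closed subset $H$ of the existing events, so appending $a$ to any linearization of $\lpo{\sigma'}$ that respects $\leq'$ still respects $\leq$. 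Let $S$ be the set of all $\lpo\sigma$ for $\sigma \in \logs$; since $\logs$ is finite and each trace is finite, $S$ is a finite set of finite partial orders.

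Next I would pass to the event structure $\es S = (E, \leq, \cfl)$ of \autoref{def:lpo2es} and then to the occurrence net $\onl$ of \autoref{def:es2on}. By \autoref{lemma:fit}, since $S$ is finite we have $S \subseteq \Omega(\es S)$, so in particular $E_\sigma$ is a configuration of $\es S$; and as the text remarks, the configurations of $\es S$ coincide with the configurations of $\onl$, so $E_\sigma$ is a configuration of $\onl$ as well. It then remains to argue that any linearization of a configuration $C$ of an occurrence net (here, the linearization $\sigma$ of $E_\sigma$) is a firing sequence of the net whose transition labels spell out exactly $\sigma$. This is essentially the standard correspondence between configurations and firing sequences in occurrence-net theory: enumerating the events of $C$ in an order $e_1, \dots, e_n$ compatible with $\leq$, one checks by induction that after firing $e_1 \dots e_k$ the marking reached is $\marking{\set{e_1,\dots,e_k}} = (M_0 \cup \postset{\set{e_1,\dots,e_k}}) \setminus \preset{\set{e_1,\dots,e_k}}$, and that this marking enables $e_{k+1}$ because $\preset{e_{k+1}}$ consists of conditions generated by events in $[e_{k+1}] \subseteq C$ and not yet consumed (no event in $C$ conflicts with $e_{k+1}$, by conflict-freeness of a configuration). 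Since $\lambda(e_i)$ is the $i$-th letter of $\sigma$, this firing sequence witnesses $\sigma \in \obs\onl$.

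The one place that deserves care — the main obstacle — is the very last link: verifying that the conditions attached by \autoref{def:es2on} do not spuriously block a linearization of a configuration, i.e. that the preset of an event $e$ is always fully marked once all events in $[e]\setminus\set e$ have fired and no event consuming those conditions outside $[e]$ has fired. Here one uses that the conditions in $\preset e$ come in two flavours: the "conflict" conditions from step~1 (whose generating event $e_K$ lies in $[e]$, hence has already fired, and whose other consumers are in direct conflict with $e$, hence absent from the configuration) and the "causality" conditions from step~2 (whose generating event is an immediate $\leqi$-predecessor of $e$, again in $[e]$, and whose other consumers have labels dependent with $\lambda(e)$ and are therefore in conflict with $e$ in $\es S$, hence absent). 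Once this bookkeeping is in place the induction goes through cleanly. I would phrase the argument so that this conflict-preservation observation — already sketched in the discussion following \autoref{def:es2on} — is quoted rather than re-derived, keeping the proof short.
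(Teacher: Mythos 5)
Your proposal follows exactly the same chain as the paper's own (very terse) proof: $\sigma$ is a linearization of $\lpo\sigma$, which by \autoref{lemma:fit} is a configuration of $\es S$, whose configurations coincide with those of $\onl$ because \autoref{def:es2on} preserves causality and conflict, and hence $\sigma$ is a firing sequence of $\onl$. The only difference is that you spell out the final bookkeeping step (that the conditions introduced by \autoref{def:es2on} never block a linearization of a configuration), which the paper asserts in one clause without elaboration; your version is correct and simply more explicit.
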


\begin{proof}
  Since every trace is a linearization of some partial order obtained by \autoref{def:lpo2es}, by \autoref{lemma:fit} every trace is a linearization of the maximal configurations of the event structure; since causality and conflict are preserved by \autoref{def:es2on}, their configurations coincide, the trace correspond to a sequential execution of the occurrence net and the result holds.
\end{proof}

It is worth noticing that the obtained net generalizes the behavior of the
model, but in a controlled manner imposed by the independence relation.
For instance, if
$\logs \eqdef \set{ab}$ and $a \ind b$, then $ba \in \obs \onl$,
even if this behavior was not present in the log.
If the expert rightly declared~$a$ and~$b$ independent (\ie, if they commute at
all states of~$\sys$), then necessarily~$ba$ is a possible observation
of~$\sys$, even if it is not in~$\logs$.
The extra information provided by the expert allows us to 
generalize the discovered model in a provably sound manner,
thus coping with the log incompleteness problem.

\iftoggle{long}{
\begin{proposition}
\label{ass:cond_dep}
  Let $\onl \eqdef (B,E,F)$ be the unfolding obtained from the log $\logs$ with
  $\ind$ as the independence relation.
  For all pairs of events $e,e' \in E$ such that $(e,e') \in {\coe \onl}$ we have
  $\preset e \cap \preset{e'} \ne \emptyset
  \Leftrightarrow
  \lambda(e) \dep \lambda(e')$.
\end{proposition}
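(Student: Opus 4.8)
The plan is to establish the two implications separately, exploiting the structure imposed by Definitions~\ref{def:log2lpo}, \ref{def:lpo2es} and~\ref{def:es2on}.

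For the direction $\Leftarrow$, assume $\lambda(e) \dep \lambda(e')$ and $(e,e') \in \coe\onl$. Since $e,e'$ are co-enabled they are not in conflict in $\onl$, hence not in conflict in $\es S$ (conflict is preserved by \autoref{def:es2on}); in particular $e \ne e'$, and neither is $e\dcfl e'$ nor are they $\cfl$-related. By \autoref{def:lpo2es}.3, $e\dcfl e'$ fails only because one of them lies in the history of the other, so \wlogg $e \le e'$, i.e. $e \leq e'$ in $\es S$. Now walk down a $\leqi$-chain from $e$ to $e'$: there is some $f$ with $e \leqi f \leq e'$ and $\lambda(f)\dep\lambda(e)$ somewhere along this chain — more carefully, since $\lambda(e)\dep\lambda(e')$, \autoref{def:log2lpo} forces $e$ to be a causal predecessor of $e'$ \emph{via} a dependency, so there is an immediate successor $f$ of $e$ with $f \leq e'$ and $\lambda(e)\dep\lambda(f)$. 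Then $f \in V_e$ and $f$ sits in some clique $K_e$ of $G_e$, so \autoref{def:es2on}.2 creates a condition $b \in \postset e \cap \preset f$. Since $f \leq e'$ and $e'$ is co-enabled with $e$, $f$ cannot actually be consumed before $e'$ is enabled; the cleanest route is: because $(e,e')\in\coe\onl$, there is a reachable marking $M$ with $\preset e, \preset{e'} \subseteq M$, and by \autoref{lem:cocond} every condition in $\preset e$ is equal to or concurrent with every condition in $\preset{e'}$. Combining this with the chain $e \leqi f \leq e'$ and re-running the clique construction of \autoref{def:es2on}.2 on the appropriate immediate successor of $e$ on the path to $e'$ yields a shared condition, giving $\preset e \cap \preset{e'} \ne \emptyset$. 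I expect this to be the delicate point and would phrase it by induction on the length of the $\leqi$-path from $e$ to $e'$, using that dependency is inherited along the path by \autoref{def:log2lpo}.

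For the direction $\Rightarrow$, assume $b \in \preset e \cap \preset{e'}$ with $e \ne e'$. The condition $b$ was added either by \autoref{def:es2on}.1 or by \autoref{def:es2on}.2. If by step~1, then $e, e'$ both belong to a clique $K$ of the direct-conflict graph $G$, so $e \dcfl e'$, hence $e \cfl e'$, contradicting $(e,e')\in\coe\onl$ (co-enabled events are not in conflict, using that $\onl$ is an occurrence net and $\reach\onl$ corresponds to conflict-free configurations). So $b$ was added by \autoref{def:es2on}.2 for some generating event $g$ with $b \in \postset g$, and $e, e'$ both lie in the same clique $K_g$ of $G_g$; by definition of the edges of $G_g$ this gives $\lambda(e) \dep \lambda(e')$ (a clique is a complete subgraph, and if $e=e'$ were forced we would again use $e\ne e'$), which is exactly what we want.

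Finally, assemble the two implications into the stated biconditional. The main obstacle is the forward construction of the shared input condition in the $\Leftarrow$ direction: one must argue that the \emph{specific} immediate successor of $e$ lying on a causal path to $e'$ is the one whose clique in \autoref{def:es2on}.2 contributes a condition also consumed by $e'$, and this is where \autoref{lem:cocond} together with the inductive dependency-propagation from \autoref{def:log2lpo} does the real work; everything else is bookkeeping over the three definitions.
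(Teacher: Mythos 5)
Your forward direction ($\Rightarrow$) is essentially the paper's argument, but your backward direction rests on a false premise that sends the whole proof down a dead end. You claim that $(e,e')\in{\coe\onl}$ implies $e$ and $e'$ are not in conflict. This is wrong: co-enabledness only requires some reachable marking $M$ with $\preset e\subseteq M$ and $\preset{e'}\subseteq M$, and two events in \emph{direct structural} conflict (sharing a precondition) are typically both enabled at such a marking --- that is exactly what a choice in an occurrence net looks like. Indeed, the conclusion you are trying to prove in the $\Leftarrow$ direction is precisely that $e$ and $e'$ \emph{do} share a precondition, i.e.\ are in structural conflict, so your premise contradicts your goal. What co-enabledness actually rules out is causal ordering: if $e<e'$ then $\preset e$ is consumed before $e'$ becomes enabled, so no marking enables both. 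The correct route (and the paper's) is therefore the opposite of yours: co-enabled $\Rightarrow$ neither event is in the other's history; combined with $\lambda(e)\dep\lambda(e')$, \autoref{def:lpo2es}.3 gives $e\dcfl e'$; and then \autoref{def:es2on}.1 (the conflict cliques), not \autoref{def:es2on}.2, supplies the common precondition. Your attempted construction via a $\leqi$-chain and the cliques of step~2 cannot work even in principle, since step~2 only produces conditions in $\postset e\cap\preset f$ for successors $f$ of $e$, never a condition in $\preset e\cap\preset{e'}$.

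The same false premise also leaves a gap in your $\Rightarrow$ direction: you dismiss the case where the shared condition $b$ comes from \autoref{def:es2on}.1 as ``contradicting co-enabledness,'' but that case is not only possible, it is the main one. Fortunately it is harmless there, because $e\dcfl e'$ by definition already requires $\lambda(e)\dep\lambda(e')$ (after noting that neither event can be in the other's past, or one would be in self-conflict), which is how the paper closes that case. So: keep your $\Rightarrow$ argument but handle the step-1 case by reading off dependence from the definition of $\dcfl$, and redo $\Leftarrow$ entirely along the lines above.
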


\begin{proof}
\item{$\Rightarrow$)} Let $b \in \preset e \cap \preset{e'}$; if $b$ was added
by \autoref{def:es2on}.2., the result trivially holds since the condition was
added in the preset of the events in the clique which relates only events with
dependent labels; if $b$ was added by \autoref{def:es2on}.1., then it was added
in the preset of the events in the clique which relates only direct conflicting
events and then $e \dcfl e'$ in the event structure. This means that none can be in the past of the other, because otherwise some event would be in self-conflict which is ruled out in event structures; now, by \autoref{def:lpo2es} we have $\lambda(e) \dep \lambda(e')$.
\item{$\Leftarrow$)} Let $\lambda(e) \dep \lambda(e')$, events $e$ and $e'$
could not be generated from the same log since if not they would be causally
related (see \autoref{def:log2lpo}) contradicting $(e,e') \in {\coe \onl}$; since
they were generated from different logs and have dependent labels, we have from
\autoref{def:lpo2es} that $e \dcfl e'$; since they are in conflict, \autoref{def:es2on}.1 adds a conditions in their presets and finally $\preset e \cap \preset{e'} \ne \emptyset$.
\qed
\end{proof}
}

The independence relation between labels gives rise to an arbitrary relation between transitions of a net
(not necessarily an independence relation):

\begin{definition}
\label{def:indt}
  Let $\ind \subseteq A \times A$ be an independence relation, $\pnetdef$ a net,
  and $\lambda \colon T \to A$.
  We define relation ${\indt \net} \subseteq T \times T$ between transitions
  of~$\net$ as
  $$
  t \indt \net t' \Leftrightarrow \lambda(t) \ind \lambda(t').
  $$
\end{definition}

In the next section we will define an approach to fold $\onl$ into a Petri
net whose natural independence relation equals $\ind$.
To formalize our approach we first need to define such natural independence.

\begin{definition}
\label{def:indu}
  Let $\netdef$ be a net.
  We define the \emph{natural independence} relation
  ${\indu \net} \subseteq T \times T$ on~$\net$ as
  $$t
  \indu \net t'
  \Leftrightarrow
  \preset t \cap \preset{t'} = \emptyset \land
  \postset t \cap \preset{t'} = \emptyset \land
  \preset t \cap \postset{t'} = \emptyset.
  $$
\end{definition}

In fact, one can prove that when~$\net$ is safe, then $\indu \net$ is the
notion of independence underlying the unfolding semantics of~$\net$.
In other words, the equivalence classes of $\equiv_{\indu \net}$ are in
bijective correspondence with the configurations in the unfolding of~$\net$.
The following result shows that the natural independence on the discovered
occurrence net corresponds to the relation provided by the expert,
when both we restrict to the set of co-enabled transitions.

\begin{theorem}
\label{prop:ind_on}
  Let $\onl$ be the occurrence net from the log $\logs$ with $\ind$ as the
  independence relation, then
  $$
  {\indt \onl} \cap {\coe \onl} = {\indu \onl} \cap {\coe \onl}
  $$
\end{theorem}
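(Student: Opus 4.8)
The plan is to prove the two inclusions separately, using \autoref{prop:ind_on}'s two hypotheses as the bridge.

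\medskip

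\noindent\textbf{The easy inclusion ($\subseteq$).}
Suppose $(e,e') \in {\indt \onl} \cap {\coe \onl}$. By \autoref{def:indt} this means $\lambda(e) \ind \lambda(e')$, so in particular $\lambda(e) \not\dep \lambda(e')$ (as $\ind$ and $\dep$ partition pairs of labels). Since the pair is co-enabled, \autoref{ass:cond_dep} applies and gives $\preset e \cap \preset{e'} = \emptyset$ (the contrapositive of its forward direction). It remains to rule out the two cross-conditions $\postset e \cap \preset{e'}$ and $\preset e \cap \postset{e'}$. If, say, $b \in \postset e \cap \preset{e'}$, then $e < e'$ in the occurrence net, contradicting $(e,e') \in {\coe \onl}$ (a co-enabled pair cannot be causally ordered: by \autoref{lem:cocond} their presets are pairwise equal-or-concurrent, and $e < e'$ would force $e$ into the proper past of $e'$, hence into self-conflict-free causal order incompatible with being simultaneously marked). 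Symmetrically $\preset e \cap \postset{e'} = \emptyset$. Thus $(e,e') \in {\indu \onl}$ by \autoref{def:indu}, and the pair is still co-enabled, giving the inclusion.

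\medskip

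\noindent\textbf{The reverse inclusion ($\supseteq$).}
Suppose $(e,e') \in {\indu \onl} \cap {\coe \onl}$. From \autoref{def:indu}, $\preset e \cap \preset{e'} = \emptyset$. Combined with co-enabledness, the backward direction of \autoref{ass:cond_dep} (in contrapositive form: $\preset e \cap \preset{e'} = \emptyset$ together with $(e,e') \in {\coe \onl}$ yields $\lambda(e) \not\dep \lambda(e')$, i.e. $\lambda(e) \ind \lambda(e')$) gives exactly $(e,e') \in {\indt \onl}$. Since co-enabledness is retained, the pair lies in ${\indt \onl} \cap {\coe \onl}$.

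\medskip

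\noindent\textbf{Main obstacle.}
The crux of the argument is entirely \autoref{ass:cond_dep}: it is the bidirectional characterization $\preset e \cap \preset{e'} \ne \emptyset \Leftrightarrow \lambda(e) \dep \lambda(e')$ \emph{restricted to co-enabled pairs} that makes both inclusions go through — the two cross-preset/postset intersections in \autoref{def:indu} are a red herring, since they are automatically empty for co-enabled pairs (no co-enabled pair can be causally related). So the real work is to confirm that \autoref{ass:cond_dep} is available and correctly quantified; given it, the theorem is a short logical manipulation. The only subtlety to check carefully is that $\ind$ and $\dep$ genuinely complement each other on label pairs (true by definition: $a \dep b$ was defined as $\lnot(a \ind b)$), and that $\lambda$ as used on events of $\onl$ agrees with the label function appearing in both \autoref{def:indt} and \autoref{ass:cond_dep} — which it does, since all three refer to the same occurrence net $\onl$.
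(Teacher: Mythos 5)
Your proof is correct and follows essentially the same route as the paper's: both inclusions hinge on the co-enabled-pair characterization $\preset e \cap \preset{e'} \ne \emptyset \Leftrightarrow \lambda(e) \dep \lambda(e')$ (\autoref{ass:cond_dep}), with the cross intersections $\postset e \cap \preset{e'}$ and $\preset e \cap \postset{e'}$ ruled out via the causal-ordering/\autoref{lem:cocond} argument exactly as in the paper. The only blemish is the opening sentence, which cites the theorem itself (\autoref{prop:ind_on}) where you clearly mean \autoref{ass:cond_dep}.
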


\iftoggle{long}{
\begin{proof}
  \item[$\subseteq)$] Let $(e,e') \in {\indt \onl} \cap {\coe \onl}$, then from \autoref{def:indt} follows that $\lambda(e) \ind \lambda(e')$ and by \autoref{ass:cond_dep} we have $\preset{e} \cap \preset{e'} = \emptyset$. Suppose $\postset{e} \cap \preset{e'} \not = \emptyset$ then $\exists b_1 \in \preset{e}$ such that $\forall b_2 \in \preset{e'}$ it holds that $b_1< b_2$ and by \autoref{lem:cocond} $(e,e') \not \in {\coe \onl}$ which leads to a contradiction. Using the same reasoning it can be proven that $\preset{e} \cap \postset{e'} = \emptyset$. By \autoref{def:indu} we can conclude that $(e,e') \in {\indu \onl} \cap {\coe \onl}$.
  \item[$\supseteq)$] Let $(e,e') \in {\indu \onl} \cap {\coe \onl}$, by \autoref{def:indu} we get $\preset{e} \cap \preset{e'} = \emptyset$ and since they are co-enabled, by \autoref{ass:cond_dep} follows $\lambda(e) \ind \lambda(e')$; finally by \autoref{def:indt} we have $(e,e') \in {\indt \onl}$ and since the events were co-enabled by assumption $(e,e') \in {\indt \onl} \cap {\coe \onl}$.
  \qed
\end{proof}
}

\section{Introducing Generalization}
\label{sec:generalization}

The construction described in the previous section guarantees that the unfolding obtained
is fitting (see \autoref{lemma:fit}). However, the difference between~$\sys$ and~$\logs$ may
be significant (e.g.,~$\sys$ can contain cyclic behavior that can be instantiated an arbitrary number of times whereas
only finite traces exist in~$\logs$) and the unfolding may be poor in generalization. 
The goal of this section is to generalize $\onl$ in a way that 
the right patterns from~$\sys$, partially observed in~$\logs$ (\eg, loops),
are incorporated in the generalized model.
To generalize, we fold the discovered occurrence net.
This folding is driven by an
%
%
%
equivalence relation~$\sim$ on~$E \cup B$ that dictates which events merge into
the same transition, and analogously for conditions; events cannot be merged with conditions.
We write $[x]_\sim \eqdef \{ x' \mid x \sim x' \}$ for the equivalence class of
node $x$. For a set $X$, $[X]_\sim \eqdef \{ [x]_\sim \mid x \in X \}$ is a set
of equivalence classes.

\begin{definition}[Folded net \cite{FahlandA13}]
\label{def:foldednet}
  Let $\on \eqdef (B,E,F)$ be an occurrence net and $\sim$ a equivalence relation on
  the nodes of $\on$.
  The folded Petri net (\wrt~$\sim$) is defined as
  $\on^\sim \eqdef (P_\sim,T_\sim,F_\sim,{M_0}_\sim)$ where
  \begin{align*}
  P_\sim & \eqdef \{ [b]_\sim \mid b \in B \}, &
  F_\sim & \eqdef \{ ([x]_\sim,[y]_\sim) \mid (x,y) \in F \}, \\
  T_\sim & \eqdef \{ [e]_\sim \mid e \in E \}, &
  {M_0}_\sim([b]_\sim) & \eqdef \lvert \{ b' \in [b]_\sim \mid \preset{b'} = \emptyset \} \rvert.
  \end{align*}
\end{definition}

Notice that the initial marking of the folded net is not necessarily safe. Safeness of the net depends on the chosen equivalence relation (see \autoref{prop:safe}). 

\subsection{Language-Preserving Generalization}
\label{sec:lpgeneralization}

Different folding equivalences guarantee different properties on the folded net.
From now on we focus our attention on three interesting classes of folding equivalences.
The first preserves sequential executions of $\onl$.

\begin{definition}[Sequence-preserving folding equivalence]
\label{def:fold1}
  Let $\on$ be an occurrence net; an equivalence relation $\sim$ is called a sequence preserving (SP) folding equivalence iff $e_1 \sim e_2$ implies $\lambda(e_1) = \lambda(e_2)$ and $[\preset{e_1}]_\sim = [\preset{e_2}]_\sim$ for all events $e_1,e_2 \in E$.
\end{definition}

From the definition above it follows that $e_1 \sim e_2$ implies $\forall b \in
\preset{e_1}: \exists b' \in \preset{e_2}$ with $b \sim b'$.  Since for every
folded net obtained from a SP folding equivalence only equally labeled events
are merged; we define then $\lambda([e]_\sim) \eqdef \lambda(e)$.

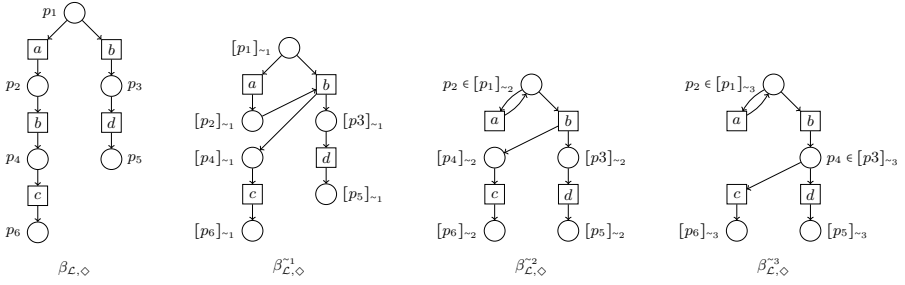
\begin{figure}[t]
\centering
  \subfigure{\scalebox{.65}{\begin{tikzpicture}

\node[tplace, label=left:$p_1$] (p1) at (0,-.75) {};
\node[transition] (e1) at (-.75,-1.5) {$a$};
\node[transition] (e2) at (.75,-1.5) {$b$};
\node[tplace, label=left:$p_2$] (p2) at (-.75,-2.25) {};
\node[tplace, label=right:$p_3$] (p3) at (.75,-2.25) {};
\node[transition] (e3) at (-.75,-3) {$b$};
\node[transition] (e4) at (.75,-3) {$d$};
\node[tplace, label=left:$p_4$] (p4) at (-.75,-3.75) {};
\node[tplace, label=right:$p_5$] (p5) at (.75,-3.75) {};
\node[transition] (e5) at (-.75,-4.5) {$c$};
\node[tplace, label=left:$p_6$] (p6) at (-.75,-5.25) {};

\draw[->] (p1) to (e1);
\draw[->] (e1) to (p2);
\draw[->] (p2) to (e3);
\draw[->] (e3) to (p4);
\draw[->] (p1) to (e2);
\draw[->] (e2) to (p3);
\draw[->] (p3) to (e4);
\draw[->] (e4) to (p5);
\draw[->] (p4) to (e5);
\draw[->] (e5) to (p6);

\node[] (name) at (0,-6) {$\onl$};

\end{tikzpicture}}}
  \hspace{.25cm}
  \subfigure{\scalebox{.65}{\begin{tikzpicture}

\node[tplace, label=left:${[p_1]_{\sim_1}}$] (p1) at (0,-.75) {};
\node[transition] (e1) at (-.75,-1.5) {$a$};
\node[transition] (e2) at (.75,-1.5) {$b$};
\node[tplace, label=left:${[p_2]_{\sim_1}}$] (p2) at (-.75,-2.25) {};
\node[tplace, label=right:${[p3]_{\sim_1}}$] (p3) at (.75,-2.25) {};
\node[transition] (e4) at (.75,-3) {$d$};
\node[tplace, label=left:${[p_4]_{\sim_1}}$] (p4) at (-.75,-3) {};
\node[tplace, label=right:${[p_5]_{\sim_1}}$] (p5) at (.75,-3.75) {};
\node[transition] (e5) at (-.75,-3.75) {$c$};
\node[tplace, label=left:${[p_6]_{\sim_1}}$] (p6) at (-.75,-4.5) {};

\draw[->] (p1) to (e1);
\draw[->] (e1) to (p2);
\draw[->] (p2) to (e2);
\draw[->] (e2) to (p4);
\draw[->] (p1) to (e2);
\draw[->] (e2) to (p3);
\draw[->] (p3) to (e4);
\draw[->] (e4) to (p5);
\draw[->] (p4) to (e5);
\draw[->] (e5) to (p6);

\node[] (name) at (0,-5.25) {$\onl^{\sim_1}$};

\end{tikzpicture}}}
  \hspace{.25cm}
  \subfigure{\scalebox{.65}{\begin{tikzpicture}

\node[tplace, label=left:${p_2 \in [p_1]_{\sim_2}}$] (p1) at (0,-.75) {};
\node[transition] (e1) at (-.75,-1.5) {$a$};
\node[transition] (e2) at (.75,-1.5) {$b$};
\node[tplace, label=right:${[p3]_{\sim_2}}$] (p3) at (.75,-2.25) {};
\node[transition] (e4) at (.75,-3) {$d$};
\node[tplace, label=left:${[p_4]_{\sim_2}}$] (p4) at (-.75,-2.25) {};
\node[tplace, label=right:${[p_5]_{\sim_2}}$] (p5) at (.75,-3.75) {};
\node[transition] (e5) at (-.75,-3) {$c$};
\node[tplace, label=left:${[p_6]_{\sim_2}}$] (p6) at (-.75,-3.75) {};

\draw[->, bend right=15] (p1) to (e1);
\draw[->, bend right=15] (e1) to (p1);
\draw[->] (p1) to (e2);
\draw[->] (e2) to (p4);
\draw[->] (e2) to (p3);
\draw[->] (p3) to (e4);
\draw[->] (e4) to (p5);
\draw[->] (p4) to (e5);
\draw[->] (e5) to (p6);

\node[] (name) at (0,-4.5) {$\onl^{\sim_2}$};

\end{tikzpicture}}}
  \hspace{.25cm}
  \subfigure{\scalebox{.65}{\begin{tikzpicture}

\node[tplace, label=left:${p_2 \in [p_1]_{\sim_3}}$] (p1) at (0,-.75) {};
\node[transition] (e1) at (-.75,-1.5) {$a$};
\node[transition] (e2) at (.75,-1.5) {$b$};
\node[tplace, label=right:$p_4 \in {[p3]_{\sim_3}}$] (p3) at (.75,-2.25) {};
\node[transition] (e4) at (.75,-3) {$d$};
\node[tplace, label=right:${[p_5]_{\sim_3}}$] (p5) at (.75,-3.75) {};
\node[transition] (e5) at (-.75,-3) {$c$};
\node[tplace, label=left:${[p_6]_{\sim_3}}$] (p6) at (-.75,-3.75) {};

\draw[->, bend right=15] (p1) to (e1);
\draw[->, bend right=15] (e1) to (p1);
\draw[->] (p1) to (e2);
\draw[->] (e2) to (p3);
\draw[->] (p3) to (e4);
\draw[->] (e4) to (p5);
\draw[->] (p3) to (e5);
\draw[->] (e5) to (p6);

\node[] (name) at (0,-4.5) {$\onl^{\sim_3}$};

\end{tikzpicture}}}
  \caption{Folding equivalences and folded nets.}
\label{fig:fold}
\end{figure}

\begin{example}
\label{ex:fold1}
  Consider the log $\logs = \{ abc, bd \}$ and the independence relation $\ind = \emptyset$. \autoref{fig:fold} shows the obtained unfolding $\onl$ (left) and three of its folded nets. The equivalence relation $\sim_1$ merges events labeled by $b$, but it does not merge their presets, i.e. is not a SP folding equivalence. It can be observed that $bd$ is not fireable in $\onl^{\sim_1}$. Whenever two  events are merged, their preconditions need to be merged to preserved sequential executions. The equivalence relation $\sim_2$ does not only merge events labeled by $b$, but it also sets $p_1 \sim_2 p_2$ and is a SP folding equivalence. The folded net $\onl^{\sim_2}$ can replay every trace in the log $\logs$, but it also adds new traces of the form $a^*, a^*b, a^*bc, a^*bd, a^*bcd$ and $a^*bdc$.
\end{example}

Given an unfolding, every SP folding equivalence generates a net that preserves its sequential executions.

\begin{restatable}{theorem}{thetwo}
\label{the:fire_seq}
  Let $\on$ be an occurrence net and $\sim$ a SP folding equivalence, then every fireable sequence $M_0 \arrow{e_1} \dots \arrow{e_n} M_n$ from $\on$ generates a fireable sequence $[M_0]_\sim \arrow{[e_1]_\sim} \dots \arrow{[e_n]_\sim} [M_n]_\sim$ from $\on^\sim$.
\end{restatable}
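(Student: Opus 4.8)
The plan is to argue by induction on the length $n$ of the firing sequence, establishing the slightly more explicit claim that for every $0 \le i \le n$ the folded net $\on^\sim$ admits the run $[M_0]_\sim \arrow{[e_1]_\sim} \cdots \arrow{[e_i]_\sim} [M_i]_\sim$, where $[M_i]_\sim$ denotes the marking of $\on^\sim$ assigning to each place $c \in P_\sim$ the number $\lvert \{ b \in M_i \mid [b]_\sim = c \} \rvert$ of conditions of $M_i$ collapsed into $c$. The base case $i = 0$ is immediate: since $M_0$ is by definition exactly the set of conditions of $\on$ with empty preset, the marking $[M_0]_\sim$ coincides with ${M_0}_\sim$ as defined in \autoref{def:foldednet}, so the empty run is (trivially) fireable.

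For the inductive step, assume the run up to $[M_i]_\sim$ is fireable in $\on^\sim$ and consider the next event $e_{i+1}$, with $\preset{e_{i+1}} \subseteq M_i$ and $M_{i+1} = (M_i \setminus \preset{e_{i+1}}) \cup \postset{e_{i+1}}$. Two things must be checked. First, that $[e_{i+1}]_\sim$ is enabled at $[M_i]_\sim$: here the SP property is the decisive ingredient, since $e' \sim e_{i+1}$ implies $[\preset{e'}]_\sim = [\preset{e_{i+1}}]_\sim$, whence the preset of the transition $[e_{i+1}]_\sim$ in $\on^\sim$ (read off $F_\sim$) is precisely $[\preset{e_{i+1}}]_\sim$; applying $[\cdot]_\sim$ to the inclusion $\preset{e_{i+1}} \subseteq M_i$ shows every place of $\preset{[e_{i+1}]_\sim}$ carries at least one token under $[M_i]_\sim$, i.e. enabledness holds. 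Second, that firing $[e_{i+1}]_\sim$ from $[M_i]_\sim$ reaches $[M_{i+1}]_\sim$: for this we use a structural fact about occurrence nets, namely that the conditions of $\postset{e_{i+1}}$ are freshly produced by $e_{i+1}$ and hence cannot already lie in $M_i$ (a configuration of an occurrence net contains no repeated event), so that $M_{i+1}$ is the disjoint combination $(M_i \setminus \preset{e_{i+1}}) \uplus \postset{e_{i+1}}$. Pushing this decomposition through $[\cdot]_\sim$ and comparing token counts place by place gives that the marking obtained by firing $[e_{i+1}]_\sim$ is exactly $[M_{i+1}]_\sim$, which completes the induction. Finally, since a SP folding equivalence only merges equally labelled events, the annotation $\lambda([e]_\sim) := \lambda(e)$ from \autoref{def:fold1} is well defined and the folded run carries the same label sequence as the original one.

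The step requiring the most care is the second one, specifically the bookkeeping of token multiplicities: one must relate, for each class $c$, the number $\lvert \{ b \in \preset{e_{i+1}} \mid [b]_\sim = c \} \rvert$ of tokens the original event removes from $c$ with the number the transition $[e_{i+1}]_\sim$ removes in $\on^\sim$, and symmetrically for the tokens produced via $\postset{e_{i+1}}$. This is transparent precisely when $\sim$ never identifies two distinct conditions occurring in a common preset or postset of an event — the situation for the foldings of $\onl$ we are interested in — and in that case all the multiplicities involved are at most one and the set-theoretic firing rule suffices; in full generality one must read $F_\sim$ and the firing rule of the (possibly non-safe) folded net with multiset semantics, as already anticipated by the remark after \autoref{def:foldednet}. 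Once the token counting is set up consistently the remaining verifications are routine.
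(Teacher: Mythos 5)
Your proof is correct and follows essentially the same route as the paper's: induction on the length of the run, with the SP property ensuring that the preset of the folded transition $[e_{i+1}]_\sim$ is exactly $[\preset{e_{i+1}}]_\sim$ and hence marked under $[M_i]_\sim$ (you argue this directly where the paper argues by contradiction, a cosmetic difference). You are in fact somewhat more thorough than the paper, which stops at enabledness and never verifies that the reached marking is $[M_{i+1}]_\sim$; your explicit token-count bookkeeping and the caveat about multiset semantics when $\sim$ identifies conditions within a single preset or postset address a subtlety the paper's proof silently skips.
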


\iftoggle{long}{
\begin{proof}
We reason inductively on the length of the fireable sequence.
\begin{description}
  \item[Base case:] if $n=0$, the results holds since an empty sequence of events from $\on$ generates an empty sequence of transitions that is trivially a fireable sequence from $\on^\sim$.
  \item[Inductive case:] we assume every fireable sequence $M_0 \arrow{e_1} \dots \arrow{e_n} M_n$ from $\on$ generates a fireable sequence $[M_0]_\sim \arrow{[e_1]_\sim} \dots \arrow{[e_n]_\sim} [M_n]_\sim$ from $\on^\sim$; we need to prove that the sequence $M_0 \arrow{e_1} \dots \arrow{e_{n+1}} M_{n+1}$ generates a fireable sequence $[M_0]_\sim \arrow{[e_1]_\sim} \dots \arrow{[e_{n+1}]_\sim} [M_{n+1}]_\sim$. Consider a fireable sequence $e_1 \dots e_{n+1}$ from $\on$, then by the inductive hypothesis, we know that the first $n$ events generate a firing sequence $[e_1]_\sim \dots  [e_n]_\sim$ from $\on^\sim$ leading to the marking $[M_n]_\sim$; we need to prove that $[M_n]_\sim \arrow{[e_{n+1}]}$. Suppose this is not true, then there exists $[b]_\sim \in \preset{[e_{n+1}]_\sim}$ with $[b]_\sim \not \in [M_n]_\sim$; the latter implies $b \not \sim b_n$ for every $b_n \in M_n$. \autoref{def:foldednet} does not add a flow arrow from a place $[b]_\sim$ to a transition $[e]_\sim$ unless a flow arrow exists 
between a condition $b' \in [b]_\sim$ and an event $e' \in [e]_\sim$ in the occurrence net. Therefore $[b]_\sim \in \preset{[e_{n+1}]_\sim}$ implies there exists $b_1 \sim b$ and $e'_{n+1} \sim e_{n+1}$ such that $b_1 \in \preset{e'_{n+1}}$ (there exists a flow arrow between $b_1$ and $e'_{n+1}$ in $\on$), and from the transitivity of $\sim$
follows that
  \begin{equation} \label{proof} \tag{*}
  b_1 \not \sim b_n \text{ for every } b_n \in M_n.
    \end{equation}
Since $\on$ allows a firing sequence of length $n+1$, we know $M_n \arrow{e_{n+1}}$ and then $\forall b_2 \in \preset{e_{n+1}}: b_2 \in M_n$. As every $b_2$ in $\preset{e_{n+1}}$ is also in $M_n$, by (\ref{proof}) we have $b_1 \not \sim b_2$ for all $b_2 \in \preset{e_{n+1}}$. From $b_1 \in \preset{e'_{n+1}}, e_{n+1} \sim e'_{n+1}$ and the fact that $\sim$ is a SP folding equivalence (\autoref{def:fold1}) follows that there exists $b_2 \in \preset{e_{n+1}}$ such that $b_1 \sim b_2$, but we showed this is not possible; therefore our assumption was false and for all $[b]_\sim \in \preset{[e_{n+1}]_\sim}$ we have $[b]_\sim \in [M_n]_\sim$. Finally $[M_n]_\sim \arrow{[e_{n+1}]_\sim}$ and $[M_0]_\sim \arrow{[e_1]_\sim} \dots \arrow{[e_{n+1}]_\sim} [M_{n+1}]_\sim$ is a fireable sequence from $\on^\sim$.
\qed
\end{description}
\end{proof}
}

As a corollary of the result above and \autoref{prop:fit}, the folded net obtained from $\onl$ with a SP folding
equivalence is fitting.

\begin{corollary}
\label{cor:fitting}
  Let $\logs$ be a log, $\ind$ an independence relation and $\sim$ a SP folding equivalence, then for every $\sigma \in \logs$ we have $\sigma \in \obs \netl$.
\end{corollary}

\iftoggle{long}{
\begin{proof}
  Since by \autoref{lemma:fit} every $\sigma \in \logs$ corresponds to a fireable sequence in $\onl$, the results follows immediately from \autoref{the:fire_seq}.
\end{proof}
}

\begin{example}
  We saw in \autoref{ex:fold1} that every trace from $\logs$ can be replayed in $\onl^{\sim_2}$, but (as expected) the net accepts more traces. However this net also adds some independence between actions of the system: after firing $b$ the net puts tokens at $[p_3]_{\sim_2}$ and $[p_4]_{\sim_2}$ and the reached marking enables concurrently actions $c$ and $d$ which contradicts $c \dep d$ (the independence relation $\ind = \emptyset$ implies $c \dep d$).
  In order to avoid this extra independence, we now consider the following class
  of equivalences.
\end{example}

\begin{definition}[Independence-preserving folding equivalence]
\label{def:fold2}
  Let $\on$ be an occurrence net and $\ind$ an independence relation; an equivalence relation $\sim$ is called an independence preserving (IP) folding equivalence iff
  \begin{enumerate}
    \item $\sim$ is a SP folding equivalence,
    \item $\lambda(e_1) \ind \lambda(e_2) \Leftrightarrow [\preset e_1]_\sim \cap [\preset{e_2}]_\sim = \emptyset \land [\preset e_1]_\sim \cap [\postset{e_2}]_\sim = \emptyset \land [\postset e_1]_\sim \cap [\preset{e_2}]_\sim = \emptyset$ for all events $e_1,e_2 \in E$.
    \item $b_1 \co b_2$ implies $b_1 \not \sim b_2$ for all conditions $b_1,b_2 \in B$.
  \end{enumerate}
\end{definition}

IP folding equivalences not only preserve the sequential behavior of $\on$, but also
ensure that $\on^\sim$ and $\on$ exhibit the same natural independence relation.

The definition above differs from the folding equivalence definition given in \cite{FahlandA13}; they
consider occurrence nets coming from an unfolding procedure which takes as an input a net. This procedure generates a
mapping between conditions and events of the generated occurrence net and places and transitions in the original net.
Such mapping is necessary to define their folding equivalence. In our setting, the occurrence net does not come from a
given net and therefore the mapping is not available. 

\begin{example}
  The equivalence $\sim_2$ from \autoref{fig:fold} is not an IP folding equivalence since the intersection of the equivalent classes of the preset of $c$ and $d$ is empty ($[\preset c]_{\sim_2} = \{[p_4]_{\sim_2}\}, [\preset d]_{\sim_2} = \{[p_3]_{\sim_2}\}$ and $\{[p_4]_{\sim_2}\} \cap \{[p_3]_{\sim_2}\} = \emptyset$), but $c$ and $d$ are not independent. Consider the equivalence relation $\sim_3$ which merges events labeled by $b$ and it sets $p_1 \sim_3 p_2$ and $p_3
\sim_3 p_4$; this relation is an IP folding equivalence. It can be observed in the net $\onl^{\sim_3}$ of
\autoref{fig:fold} that all the traces from the log can be replayed, but new independence relations are not introduced.
\end{example}

The occurrence net $\onl$ is clearly safe.
We show that $\netl$ is also safe when~$\sim$ is an IP folding equivalence.
In this work, we constraint IP equivalences to generate safe nets because their
natural independence relation is well understood (\autoref{def:indu}),
thus allowing us to assign a solid meaning to the class IP.
It is unclear what is the natural unconditional independence of an unsafe net,
and extending our definitions to such nets is subject of future work.

\begin{proposition}
\label{prop:safe}
  Let $\onl$ be the unfolding obtained from the log $\logs$ with $\ind$ as the independence relation and $\sim$ an IP folding equivalence. Then $\netl$ is safe.
\end{proposition}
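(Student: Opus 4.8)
The plan is to show that every reachable marking of $\netl$ assigns at most one token to each place, by relating reachable markings of $\netl$ to reachable markings (i.e.\ configurations) of $\onl$. First I would establish the key invariant: for every reachable marking $M$ of $\netl$ there is a configuration $C$ of $\onl$ with $\marking C$ the corresponding marking of $\onl$, such that $M = [\marking C]_\sim$, and moreover the preimage of each marked place is a $\co$-set in $\onl$ (i.e.\ $\marking C$ itself, being the marking of a configuration, consists of pairwise-concurrent conditions). This is proved by induction on the length of a firing sequence in $\netl$, using Theorem~\ref{the:fire_seq} for the forward direction (every configuration of $\onl$ maps to a reachable marking of $\netl$) and, for the converse, the fact that since $\sim$ is an SP folding equivalence, a transition $[e]_\sim$ fireable at $[\marking C]_\sim$ can be matched by a suitable representative $e'$ fireable at $\marking C$ in $\onl$, extending the configuration.

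Granting the invariant, suppose for contradiction that some reachable marking $M$ of $\netl$ has $M([b]_\sim)\ge 2$. Then there exist two \emph{distinct} conditions $b_1, b_2$ of $\onl$ both present in a single $\co$-set $\marking C$ with $b_1 \sim b_2$. Since $b_1, b_2$ belong to the marking of one configuration, they are neither causally related nor in conflict, hence $b_1 \co b_2$. But condition~3 of Definition~\ref{def:fold2} (IP folding equivalence) says precisely that $b_1 \co b_2$ implies $b_1 \not\sim b_2$, a contradiction. It remains to check the base case: the initial marking ${M_0}_\sim$ is safe, because the initial conditions of $\onl$ (those with empty preset) are pairwise concurrent, so by condition~3 no two of them are merged, hence ${M_0}_\sim([b]_\sim) = \lvert\{b' \in [b]_\sim \mid \preset{b'} = \emptyset\}\rvert \le 1$ for every place.

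The main obstacle I anticipate is the converse direction of the invariant: arguing that a firing step $[M_n]_\sim \arrow{[e_{n+1}]_\sim} M_{n+1}$ in $\netl$ can always be lifted to a firing step in $\onl$ that keeps the preimage of the marking a genuine reachable marking of $\onl$ (not merely an arbitrary set of conditions). The SP property guarantees the presets match up to $\sim$, but one must argue that the specific representative conditions currently carrying the tokens — which by the induction hypothesis form a $\co$-set — are in the preset of some single event $e_{n+1}' \sim e_{n+1}$, so that firing it yields again the marking of a configuration. Here one uses that an SP folding equivalence merges only equally labelled events with $\sim$-equal presets, together with the structure of $\onl$ from Definition~\ref{def:es2on}; a clean way to organize this is to carry, as part of the induction hypothesis, the stronger statement that the token-bearing conditions are exactly $\marking C$ for an actual configuration $C$ of $\onl$, and that $[e_{n+1}]_\sim$ fireable forces one such $C$ to admit an extension by some $e_{n+1}'\sim e_{n+1}$. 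Once this bookkeeping is in place, condition~3 of Definition~\ref{def:fold2} does the rest of the work essentially for free.
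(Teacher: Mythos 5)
Your overall strategy is the same as the paper's: reduce unsafety of $\netl$ to the existence of two $\sim$-equivalent conditions lying in a common reachable marking of $\onl$, observe that two conditions in a common reachable marking of an occurrence net are concurrent, and conclude by condition~3 of Definition~\ref{def:fold2}. Your handling of the base case (initial conditions are pairwise concurrent, hence never merged, hence ${M_0}_\sim$ is a set) is correct and is exactly what is needed there. The paper's own proof compresses the whole reduction into the phrase ``by the above this is possible iff there exists $C \in \reach\onl$ and $b_1,b_2\in C$ such that $b_1 \sim b_2$''; you have correctly identified that this unproved equivalence is where the real work lies.

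However, the inductive lifting you propose to establish it does not go through as described. The folded net genuinely generalizes the behaviour of $\onl$ (that is the point of folding), so a firing step of $\netl$ need not be matched by firing a representative event from the ``current'' configuration of the finite acyclic net $\onl$. The paper's own \autoref{fig:fold} already illustrates this: in $\onl^{\sim_3}$ (presented there as an IP equivalence) the transition $[e_a]_{\sim_3}$ is a self-loop on $[p_1]_{\sim_3}$ and is re-enabled after firing $a$, yet the unique $a$-labelled event $e_a$ of $\onl$ has preset $\{p_1\}$ and is not enabled at $\marking{\{e_a\}}=\{p_2\}$; no representative of $[e_a]_{\sim_3}$ is fireable there. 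You partly anticipate this by allowing the witnessing configuration $C$ to be re-chosen at each step (here one would fall back to $C=\emptyset$), but the existence of such a $C$ --- one with $[\marking C]_\sim$ equal to the current folded marking \emph{and} admitting an extension by some representative of the fired transition --- is precisely the missing lemma. It does not follow ``essentially for free'' from the SP condition together with Definition~\ref{def:es2on}: SP only guarantees that the preset of $[e]_\sim$ is covered by $\sim$-classes of currently marked conditions, not that the actual preset of some representative is marked in some witnessing configuration. To be fair, the paper's proof silently assumes the same reduction, so your attempt is no less rigorous than the original; but as a self-contained argument it still has this gap, and closing it requires a separately proved invariant characterizing the reachable markings of $\netl$ in terms of configurations of $\onl$, using the IP conditions and not merely SP.
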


\begin{proof}
  The unfolding $\onl$ is trivially safe since its initial marking puts one token in its minimal conditions and each condition contains only one event in its preset and that event cannot put more than one token in the condition. Suppose $\netl$ is not safe, by the above this is possible iff there exists $C \in \reach \onl$ and $b_1,b_2 \in C$ such that $b_1 \sim b_2$. If $b_1$ and $b_2$ belong to a reachable marking, then they must be concurrent and since $\sim$ is an IP folding equivalence they cannot be merged, which leads to a contradiction. Finally $\netl$ must be safe.
\qed
\end{proof}

\autoref{prop:ind_on} shows that the structural relation between events of the unfolding and the relation generated by the independence given by the expert coincide (when we restrict to co-enabled events); the result also holds for the folded net when an IP folding equivalence is used.

\begin{restatable}{theorem}{thethree}
\label{the:ip}
  Let $\onl$ be the unfolding obtained from the log $\logs$ with $\ind$ as the independence relation and $\sim$ an IP folding equivalence, then ${\indt \netl} = {\indu \netl}$.
\end{restatable}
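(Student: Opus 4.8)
The plan is to unfold the definitions of ${\indt \netl}$ and ${\indu \netl}$ on two transitions $[e]_\sim,[e']_\sim$ of the folded net $\netl$ and bridge them through condition~2 of \autoref{def:fold2}, which is exactly the coincidence ``${\indt{}}={\indu{}}$'' already stated at the level of the events of $\onl$. First I record two facts read off \autoref{def:foldednet}. Because $\sim$ is an SP folding equivalence, $\lambda([e]_\sim)\eqdef\lambda(e)$ is well defined, so $[e]_\sim\indt{\netl}[e']_\sim$ iff $\lambda(e)\ind\lambda(e')$. Second, $\preset{[e]_\sim}=\bigcup_{f\sim e}[\preset f]_\sim$ and $\postset{[e]_\sim}=\bigcup_{f\sim e}[\postset f]_\sim$; by \autoref{def:fold1}, $f\sim e$ forces $[\preset f]_\sim=[\preset e]_\sim$, so the first union collapses to $\preset{[e]_\sim}=[\preset e]_\sim$, whereas no such collapse is available for postsets.

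For ${\indu \netl}\subseteq{\indt \netl}$, assume $[e]_\sim\indu{\netl}[e']_\sim$. Substituting the facts above, and choosing the representatives $f=e$ and $f'=e'$ inside the two postset unions, the three conjuncts of \autoref{def:indu} give $[\preset e]_\sim\cap[\preset{e'}]_\sim=\emptyset$, $[\postset e]_\sim\cap[\preset{e'}]_\sim=\emptyset$ and $[\preset e]_\sim\cap[\postset{e'}]_\sim=\emptyset$. This is precisely the right-hand side of condition~2 of \autoref{def:fold2} instantiated at the pair $(e,e')$, hence $\lambda(e)\ind\lambda(e')$ and therefore $[e]_\sim\indt{\netl}[e']_\sim$.

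For ${\indt \netl}\subseteq{\indu \netl}$, assume $\lambda(e)\ind\lambda(e')$. Then $\preset{[e]_\sim}\cap\preset{[e']_\sim}=[\preset e]_\sim\cap[\preset{e'}]_\sim=\emptyset$ by condition~2 of \autoref{def:fold2} at $(e,e')$. To get $\postset{[e]_\sim}\cap\preset{[e']_\sim}=\emptyset$ I must make every term $[\postset f]_\sim\cap[\preset{e'}]_\sim$ with $f\sim e$ empty: since $\lambda(f)=\lambda(e)\ind\lambda(e')$, condition~2 of \autoref{def:fold2} at $(f,e')$ gives $[\postset f]_\sim\cap[\preset{e'}]_\sim=\emptyset$, so the union over all such $f$ vanishes; symmetrically $\preset{[e]_\sim}\cap\postset{[e']_\sim}=\emptyset$, invoking condition~2 at every pair $(e,f')$ with $f'\sim e'$. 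By \autoref{def:indu}, $[e]_\sim\indu{\netl}[e']_\sim$.

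The one genuine subtlety --- and the step I would check most carefully --- is the preset/postset asymmetry of the folded net: $\preset{[e]_\sim}$ collapses to $[\preset e]_\sim$ thanks to the SP property, but $\postset{[e]_\sim}$ remains a union over all $\sim$-representatives of $e$, so the ``$\Rightarrow$'' half is not a single application of condition~2 but rather one application per representative. This is precisely why condition~2 of \autoref{def:fold2} was stated over \emph{all} event pairs and not merely the co-enabled ones as in \autoref{prop:ind_on} --- that extra strength is what makes the lifting go through. Note that condition~3 of \autoref{def:fold2} and the safeness of $\netl$ (\autoref{prop:safe}) are not needed here, and that the argument works verbatim for any occurrence net equipped with an IP folding equivalence, not just $\onl$.
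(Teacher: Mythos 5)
Your proof is correct and follows essentially the same route as the paper's: both unfold \autoref{def:foldednet} and \autoref{def:indu} and bridge the two relations through condition~2 of \autoref{def:fold2}. Your explicit handling of the non-collapsing postset unions (one application of condition~2 per representative $f \sim e$) is in fact more careful than the paper's, which covers the same point implicitly by quantifying over all $e \in t$, $e' \in t'$ in its chain of equivalences.
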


\iftoggle{long}{
\begin{proof}
Let $(t,t') \in {\indt \netl}$, from \autoref{def:indt} this is true iff $\lambda(t) \ind \lambda(t')$ which is true iff for all $e \in t, e' \in t'$ we have $\lambda(e) \ind \lambda(e')$ (since the folding equivalence preserves labeling). As $\sim$ is a IP folding equivalence, independence between labeles holds iff for all $e\in t, e'\in t'$ we have $[\preset e]_\sim \cap [\preset{e'}]_\sim = \emptyset$ (see \autoref{def:fold2}.2). Using \autoref{def:foldednet}, the presets of $t$ and $t'$ are generated by some of the conditions in the preset of each $e$ and $e'$ respectively (the folding procedure does not introduces flow arrows) and we showed above that those conditions generate places that do not intersect those places generated by conditions in the preset of every $e'$; thus $[\preset e]_\sim \cap [\preset{e'}]_\sim = \emptyset$ iff $\preset{[e]_\sim} \cap \preset{[e']_\sim} = \emptyset$ iff $\preset t \cap \preset{t'} = \emptyset$ from $t = [e]_\sim$ and $t' = [e']_\sim$. Using the same 
reasoning it can be shown that independence between labels holds iff $\preset t \cap \postset{t'} = \emptyset$ and $\postset t \cap \preset{t'} = \emptyset$. Finally from \autoref{def:indu} we get $\preset t \cap \preset{t'} = \emptyset \land \preset t \cap \postset{t'} = \emptyset \land \postset t \cap \preset{t'} = \emptyset$ iff $(t,t') \in {\indu \netl}$.
\qed
\end{proof}
}

\subsection{Controlling Generalization via Negative Information}

We have shown that IP folding equivalences preserve independence.
However, they could still introduce new unintended behaviour not present
in~$\sys$.
In this section we limit this phenomena
by considering \emph{negative information},
denoted by traces that should not be allowed by the model.
Concretely, we consider negative information which is also given in the form of sequences $\sigma \in \logs^-
\subseteq A^*$. Negative information is often provided by an expert, but it can also be obtained
automatically by recent methods~\cite{BrouckeWVB14}. Very few techniques in the literature use negative
information in process discovery~\cite{Goedertier2009}. In this work, we assume a minimality criterion on the negative traces
used:


\begin{assumption}
  Let $\logs \eqdef \logs^+ \uplus \logs^-$ be a pair of positive and negative logs and $\ind$ the independence relation 
  given by the expert. Any negative trace $\sigma \in \logs^-$ corresponds to
the local configuration of some event $\esig$ in $\onl$.
\end{assumption}

This assumption implies that each negative trace is of the form $\sigma' a$ where $\sigma'$ only contains the actions
that are necessarily to fire $a$.  If $a$ can happen without them, they should not be consider part of $\sigma$. By
removing all events $\esig$ from $\onl$ (one for each
negative trace $\sigma \in \logs^-$), we obtain a new occurrence net denoted by $\onls$. The goal of this section is to fold this occurrence net without re-introducing the negative traces in the generalization step.
If the expert is unable to provide negative traces satisfying this assumption,
the discovery tool can always let him/her choose $\esig$ from a visual
representation of the unfolding.

\begin{definition}[Removal-aware folding equivalence]
\label{def:fold3}
  Let $\ondef$ be an occurrence net and $\logs^-$ a  negative log; an equivalence relation $\sim$ is called
removal
aware (RA) folding equivalence iff 
  \begin{enumerate}
    \item $\sim$ is a SP folding equivalence, and
    \item for every $\sigma \in \logs^-$ and $e' \in E$ we have $\lambda(e') = \lambda(\esig)$ implies $[\preset{e'}]_\sim \not \subseteq [\preset{\esig}]$.
  \end{enumerate}
\end{definition}

The folded net obtained from $\onls$ with a RA folding equivalence does not contain any of the negative traces.

\begin{restatable}{theorem}{thefour}
\label{the:ra}
  Let $\onls$ be the unfolding obtained from the log $\logs \eqdef \logs^+ \uplus \logs^-$ with $\ind$ as the independence
relation after removing the corresponding event of each negative trace and
$\sim$ a RA folding equivalence,\footnote{Since \autoref{def:fold3} refers to the events that generates the
local configurations of the negative traces, the folding equivalence must be defined over the nodes of $\onl$ and not
those of $\onls$.} then $$\obs \netls \cap \logs^- = \emptyset$$
\end{restatable}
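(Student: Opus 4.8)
The plan is to argue by contradiction: suppose some negative trace $\sigma \in \logs^-$ is in $\obs \netls$. By assumption, $\sigma = \sigma' a$ where $\esig = \tup{a, H}$ is an event of $\onl$ whose local configuration $[\esig] = H \cup \{\esig\}$ projects (under $\lambda$) to $\sigma$, and $\esig$ is exactly one of the events removed to form $\onls$. So $\netls$ contains a fireable sequence whose image under $\lambda$ is $\sigma' a$; in particular it ends by firing some transition $t$ with $\lambda(t) = a$, and $t = [e']_\sim$ for some event $e' \in E$ of $\onl$ with $\lambda(e') = a$ (recall, as the footnote stresses, that $\sim$ is defined on the nodes of $\onl$, so $e'$ ranges over all events of $\onl$, including possibly $\esig$ itself). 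The key point I want to extract is that, just before firing $t$, the reached marking of $\netls$ must contain all of $\preset t = \preset{[e']_\sim}$, which by Definition \ref{def:foldednet} equals $[\preset{e'}]_\sim$.

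First I would make precise what "the reached marking contains $[\preset{e'}]_\sim$'' buys us. Since $\sigma' a$ is the \emph{full} image of the firing sequence and $\sigma'$ is (by the minimality assumption) exactly the set of actions necessary to fire $a$ through $\esig$, the prefix firing sequence of $\netls$ realizing $\sigma'$ must, when lifted back, correspond to firing the events of $H = \preset{[\esig]}\setminus\{\esig\}$ in $\onls$ — here one uses Theorem \ref{the:fire_seq} in the forward direction for the existence claim and the structure of $\onl$ (the fact that $H$ is the unique minimal causally-closed set enabling an $a$-labelled event, Definitions \ref{def:log2lpo}, \ref{def:lpo2es}, \ref{def:es2on}) to pin down that no other events could contribute. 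Consequently the marking of $\netls$ reached after $\sigma'$ is $[\marking H]_\sim$, i.e. the folded image of the cut of $\esig$, and this marking enables $t = [e']_\sim$. Therefore $[\preset{e'}]_\sim \subseteq [\marking H]_\sim$. Now $\marking H$ is precisely the set of conditions of $\onl$ consumed by $\esig$ together with conditions concurrent to it that are still marked; but the conditions actually consumable by an $a$-labelled event from this cut are exactly $\preset{\esig}$ (again by the construction in Definition \ref{def:es2on} and the way presets of same-label events are built). So $[\preset{e'}]_\sim \subseteq [\preset{\esig}]_\sim$, which together with $\lambda(e') = a = \lambda(\esig)$ directly contradicts clause 2 of Definition \ref{def:fold3} (the RA condition). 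Hence no such $\sigma$ exists and $\obs\netls \cap \logs^- = \emptyset$.

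The main obstacle, and the step I would spend the most care on, is the second one: showing rigorously that a firing sequence of $\netls$ whose $\lambda$-image is $\sigma' a$ can only be produced by (the folded images of) the events in $H$ followed by an $a$-labelled event \emph{whose preset folds into} $[\preset{\esig}]_\sim$. The subtlety is twofold. One has to use the minimality \autoref{assumption} in an essential way — without it, $\sigma'$ could contain spurious actions and the reached marking could be larger, enabling an $a$-transition whose preset does not fold into $[\preset{\esig}]_\sim$. And one has to handle the fact that folding can merge conditions: a token needed by $t$ might sit in a place $[b]_\sim$ that is marked "for the wrong reason", i.e. because some $b'' \sim b$ with $b'' \notin \marking H$ got marked. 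This is where I would lean on $\onls$ being obtained from $\onl$ by removing $\esig$ (so the event that would "legitimately" produce $a$ from this cut is gone), on Theorem \ref{the:fire_seq}'s correspondence between cuts of $\onls$ and reachable markings of $\netls$, and — if needed — on safeness-type arguments in the spirit of \autoref{prop:safe} to control which conditions can coexist. I would likely package the needed correspondence as: every reachable marking of $\netls$ is $[C]_\sim$ for some configuration $C$ of $\onls$, and then the argument reduces to a statement purely about configurations of $\onls$ and the combinatorics of $\lambda$-images, where the contradiction with Definition \ref{def:fold3}.2 becomes transparent.
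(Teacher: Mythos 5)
Your plan follows essentially the same route as the paper's own proof: assume $\sigma = \sigma' a \in \obs\netls \cap \logs^-$, use \autoref{the:fire_seq} to relate the marking of $\netls$ reached after $\sigma'$ to the folded image of the cut of $[\esig]\setminus\{\esig\}$ in $\onls$, and conclude that the enabled $a$-labelled transition $[e']_\sim$ would have to satisfy $[\preset{e'}]_\sim \subseteq [\preset{\esig}]_\sim$, contradicting clause~2 of \autoref{def:fold3}. The two subtleties you single out for extra care --- that the $\netls$-firing sequence realizing $\sigma'$ must be matched with an $\onls$-configuration, and that enabledness a priori only yields containment in the folded cut rather than in $[\preset{\esig}]_\sim$ --- are present but left implicit in the paper's proof as well, so your treatment is, if anything, the more careful of the two.
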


\iftoggle{long}{
\begin{proof}
  Let $\sigma \eqdef \sigma' a$ and suppose $\sigma \in \obs \netls \cap \logs^-$. Since $\sigma \in \logs^-$, by \autoref{lemma:fit} $\sigma \in \obs \onl$, it follows by construction (see \autoref{def:lpo2es}) that $\sigma$ generates a unique local configuration which is removed in $\onls$ (by removing $\esig$). Thus $\sigma \not \in \obs \onls$, but $\sigma' \in \obs \onls$ since only the maximal event $\esig$ of the local configuration is removed. Let $M$ be the marking reached in $\onls$ after $\sigma'$, we know (using \autoref{the:fire_seq}) that $\sigma'$ generates a firing sequence in $\netls$ which leads to the reachable marking $[M]_\sim$. Since we assumed $\sigma \in \obs \netls$, there exists a transition $[e_a]_\sim$ such that $[M]_\sim \arrow{[e_a]_\sim}$ with $\lambda([e_a]_\sim) = a$, but this implies (from \autoref{def:foldednet}) that the preset of $\esig$ was merged with the preset of $e_a$ which contradicts the assumption that $\sim$ is a RA folding equivalence. Finally the assumption was 
false and $\obs \netls \cap \logs^- = \emptyset$. \qed
\end{proof}
}

\section{Computing Folding Equivalences}
\label{sec:computing}

\autoref{sec:discovery} presents a discovery algorithm that generates fitting
occurrence nets and \autoref{sec:generalization} defines three classes of
folding criteria, SP, IP, and RA, that ensure various properties.
This section proposes an approach to synthesize~SP,~IP and~RA folding equivalences
using~SMT.

\subsection{SMT Encoding}
\label{sec:sat}

We use an SMT encoding to find folding equivalences generating a net $\on^\sim$ satisfying specific metric properties.
Specifically, given a measure $\hat c$ (cf., \autoref{sec:prelim}),
decidable in polynomial time, and a number~$k \in \nat$, we generate an SMT formula
which is satisfiable iff there exists a folding equivalence $\sim$ such that $\hat c(\on^\sim) = k$. 
We consider the number of transitions in the folded net as the measure $\hat c$,
however, theoretically, any other measure that can be computed in polynomial time
could be used.
As explained in \autoref{sec:prelim}
simple functions like counting the number of nodes/arcs provide in
practice reasonable results.

Given an occurrence net~$\ondef$,
for every event $e \in E$ and condition $b \in B$
we have integer variables $v_e$ and $v_b$.
The key intuition is that two events (conditions) whose variables have equal
number are equivalent and will be merged into the same transition (place).
The following formulas state, respectively, that every element of a set $X$ is
related with at least one element of a set $Y$, and that every element of $X$ is
not related with any element of $Y$: 
$$\ssc X Y \eqdef \bigwedge\limits_{x \in X}\bigvee\limits_{y \in Y} (v_x = v_y)
\hspace{10mm} \disj X Y \eqdef \hspace{-2mm}\bigwedge\limits_{x \in X, y \in Y} \hspace{-2mm} (v_x \not = v_y)$$

We force any satisfying assignment to represent an~SP folding equivalence
(\autoref{def:fold1}) with the following two constraints:
$$
\phi_\on^{SP} \eqdef \phi_\on^{lab} \land \phi_\on^{pre}.
$$
Formulas $\phi_\on^{lab}$ and $\phi_\on^{pre}$ impose that only equally
labeled events should be equivalent and that if two events are equivalent, then their presets
should generate the same equivalence class:
$$\phi_\on^{lab} \eqdef \bigwedge\limits_{\substack{e,e' \in E \\ \lambda(e)
\not = \lambda(e')}} \hspace{-2mm} (v_e \not = v_{e'}) \hspace{10mm}
\phi_\on^{pre} \eqdef \bigwedge\limits_{e,e' \in E} (v_e = v_{e'} \Rightarrow (\ssc {\preset e} {\preset{e'}} \land \ssc {\preset{e'}} {\preset e}))$$

In addition to the properties encoded above, an IP folding equivalence
(\autoref{def:fold2}) should satisfy some other restrictions: $$\phi_\on^{IP}
\eqdef \phi_\on^{SP} \land \phi_\on^{ind} \land \phi_\on^{co}$$
where $\phi_\on^{ind}$ imposes that the presets and postsets of events with
independent labels should generate equivalence classes that do not intersect and
$\phi_\on^{co}$ forbids concurrent conditions to be merged: $$\phi_\on^{ind} \eqdef
\bigwedge\limits_{e,e' \in E} \hspace{-2mm} (\lambda(e) \ind \lambda(e')
\Leftrightarrow (\disj{\preset e}{\preset{e'}} \land \disj{\preset
e}{\postset{e'}} \land \disj{\preset e}{\postset{e'}}))
\hspace{8mm}\phi_\on^{co} \eqdef \bigwedge\limits_{\substack{b,b' \in B \\ b \co b'}} \hspace{-2mm} (v_b \not = v_{b'})$$


Given a negative log $\logs^-$, to encode a RA folding equivalence
(\autoref{def:fold3}) we define: $$\phi_{\on, \logs^-}^{RA} \eqdef \phi_\on^{SP} \land  
(\hspace{-2mm}\bigwedge\limits_{\substack{\sigma \in \logs^-, e' \in E\\ \lambda(e') = \lambda(\esig)}} \hspace{-3mm}
\neg \ssc {\preset{e'}} {\preset \esig})$$ where the right part of the conjunction imposes that for every $\esig$
generated by a negative trace and any other event with the same label, their presets cannot generate the same equivalence
class.

We now encode the optimality (\wrt the number of transitions) of the mined net.
Given an occurrence net~$\ondef$, each event $e \in E$ generates a
transition $v_e$ in the folded net $\on^\sim$. To impose that the number of
transitions in $\on^\sim$ should be at most $k \in \nat$, we define:
$$\phi_{\on,k}^{MET} \eqdef \bigwedge\limits_{e \in E} (1 \le v_e \leq k)$$

To find an IP and RA folding equivalence that generates a net with at most $k$
transitions we propose the following encoding: $$\phi_{\on,\logs^-,k}^{OPT}
\eqdef \phi_{\on}^{IP} \land \phi_{\on,\logs^-}^{RA} \land \phi_{\on,k}^{MET}$$

\begin{restatable}{theorem}{thefive}
  Let $\logs \eqdef \logs^+ \uplus \logs^-$ be a set of positive and negative logs, $\ind \subseteq A \times A$ and independence relation and $k \in \nat$. The formula $\phi_{\on,\logs^-,k}^{OPT}$ is satisfiable iff there exists an IP and RA folding equivalence $\sim$ such that $\on_{\logs, \ind, *}^\sim$ contains at most $k$ transitions.
\end{restatable}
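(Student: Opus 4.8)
The plan is to prove the two directions of the biconditional separately, using the fact that the SMT formula $\phi_{\on,\logs^-,k}^{OPT}$ is built as a conjunction of sub-formulas, each of which faithfully encodes one clause of Definitions~\ref{def:fold1}, \ref{def:fold2} and \ref{def:fold3} together with the metric bound. The bridge between the logical and the combinatorial world is the standard one: from a satisfying assignment $\nu$ one defines $x \sim_\nu y$ iff $\nu(v_x) = \nu(v_y)$ (restricted to pairs that are both events or both conditions), and conversely from an equivalence relation $\sim$ with at most $k$ transition-classes one picks a numbering $\nu_\sim$ that assigns to each node the index (in $\{1,\dots,k\}$ for events) of its class. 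That $\sim_\nu$ is an equivalence relation is immediate because equality of integers is; that $\nu_\sim$ respects the metric bound is exactly what $\phi_{\on,k}^{MET}$ asserts.

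For the ($\Leftarrow$) direction I would assume a given IP-and-RA folding equivalence $\sim$ on the nodes of $\onl$ with $\lvert T_\sim \rvert \le k$ on $\onls$, fix the numbering $\nu_\sim$, and check clause by clause that $\nu_\sim \models \phi_{\on,\logs^-,k}^{OPT}$: $\phi_\on^{lab}$ follows from the label-compatibility requirement in \autoref{def:fold1}; $\phi_\on^{pre}$ follows from $[\preset{e_1}]_\sim = [\preset{e_2}]_\sim$, noting that the subset-cover predicate $\ssc X Y$ is precisely the first-order rendering of ``$[X]_\sim \subseteq [Y]_\sim$''; $\phi_\on^{ind}$ and $\phi_\on^{co}$ match items~2 and~3 of \autoref{def:fold2} once one observes that $\disj X Y$ renders ``$[X]_\sim \cap [Y]_\sim = \emptyset$''; and the RA conjunct matches item~2 of \autoref{def:fold3} since $\neg\ssc{\preset{e'}}{\preset{\esig}}$ expresses ``$[\preset{e'}]_\sim \not\subseteq [\preset{\esig}]_\sim$''. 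The ($\Rightarrow$) direction runs the same correspondence backwards: from $\nu \models \phi_{\on,\logs^-,k}^{OPT}$ define $\sim_\nu$ and verify it satisfies every clause of the three definitions, plus that the number of distinct values taken by the $v_e$ is at most $k$ (because each lies in $[1,k]$ by $\phi_{\on,k}^{MET}$), hence $\onls^{\sim_\nu}$ has at most $k$ transitions.

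The only genuinely delicate points — and where I would spend the words of the proof — are, first, pinning down that $\ssc X Y$ and $\disj X Y$ really do capture set-inclusion and disjointness \emph{of the image classes} $[X]_\sim, [Y]_\sim$ rather than something subtly weaker (this is a short but necessary lemma: $[X]_\sim \subseteq [Y]_\sim$ iff every $x \in X$ is $\sim$-related to some $y \in Y$, and $[X]_\sim \cap [Y]_\sim = \emptyset$ iff no $x \in X$ is $\sim$-related to any $y \in Y$); and second, the bookkeeping issue flagged in the footnote of \autoref{the:ra}, namely that although the folded object is $\onls$, the equivalence and hence all the variables $v_e$, $v_b$ are defined over the nodes of the larger net $\onl$, so the quantifiers in $\phi_\on^{pre}$, $\phi_\on^{ind}$, $\phi_\on^{co}$ and the RA conjunct range over the events of $\onl$ (including the removed $\esig$), while the transition count in $\phi_{\on,k}^{MET}$ must be read on the surviving events only; I would make this explicit by taking $\on = \onl$ in the encoding but restricting the metric conjunct to $E(\onls)$, consistently with how \autoref{def:fold3} and \autoref{the:ra} are stated. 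With these two observations in place the proof is a routine clause-by-clause verification and I would present it as such, citing \autoref{def:fold1}, \autoref{def:fold2}, \autoref{def:fold3} and the displayed definitions of $\ssc{\cdot}{\cdot}$, $\disj{\cdot}{\cdot}$, $\phi_\on^{SP}$, $\phi_\on^{IP}$, $\phi_{\on,\logs^-}^{RA}$ and $\phi_{\on,k}^{MET}$ for each matching step.
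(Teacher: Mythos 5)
Your proposal is correct and follows essentially the same route as the paper's own proof: define $\sim_\nu$ from a satisfying assignment via equality of the integer variables (and conversely number the classes), then verify clause by clause that $\phi_\on^{IP}$, $\phi_{\on,\logs^-}^{RA}$ and $\phi_{\on,k}^{MET}$ match \autoref{def:fold2}, \autoref{def:fold3} and the transition bound. Your two ``delicate points'' (the semantics of $\ssc{\cdot}{\cdot}$ and $\disj{\cdot}{\cdot}$ as class inclusion and disjointness, and the $\onl$ versus $\onls$ bookkeeping) are handled only implicitly in the paper, so making them explicit is a modest improvement rather than a divergence.
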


\iftoggle{long}{
\begin{proof}
  Let $\psi$ be a solution of $\phi_{\on,\logs^-,k}^{OPT}$ and let $\sim_\psi$ be the relation such that $x \sim_\psi x'$ iff $\psi \models (v_x = v_{x'})$, i.e. $\psi$ assigns the same value to $v_x$ and $v_{x'}$. By the reflexivity, symmetry and transitivity of integer numbers follows that $\sim_\psi$ is an equivalence relation. The assignment $\psi$ is a solution of the formula iff all of the following are true:
  \begin{enumerate}
    \item $\phi_{\onl}^{IP}$ holds; this is true iff \emph{(i)} for every two events $e,e'$ with different labels $v_e \not = v_{e'}$, \emph{(ii)} if $v_e = v_{e'}$ then for all $b \in \preset e$ there exists $b'
\in \preset{e'}$ such that $v_b = v_{b'}$ and viceversa, \emph{(iii)} for every pair
$e,e'$ of events with independent labels \emph{(iii.a)} for all conditions $b \in \preset e,b' \in \preset{e'}$ we have $v_b \not = v_{b'}$, \emph{(iii.b)} for all conditions $b \in \preset e,b' \in \postset{e'}$ we have $v_b \not = v_{b'}$, \emph{(iii.c)} for all conditions $b \in \postset e,b' \in \preset{e'}$ we have $v_b \not = v_{b'}$, \emph{(iv)} for every pair $b,b'$ of concurrent conditions we have $v_b \not \sim v_{b'}$; by the definition of $\sim_\psi$ we have \emph{(i)} for every two events
$e,e'$ with different labels $e \not \sim_\psi e'$, \emph{(ii)} if $e \sim_\psi e'$ then $[\preset e]_{\sim_\psi}
= [\preset{e'}]_{\sim_\psi}$, \emph{(iii)} for every pair
$e,e'$ of events with independent labels $[\preset e]_{\sim_\psi} \cap [\preset{e'}]_{\sim_\psi} = \emptyset, [\preset e]_{\sim_\psi} \cap [\postset{e'}]_{\sim_\psi} = \emptyset$ and $[\postset e]_{\sim_\psi} \cap [\preset{e'}]_{\sim_\psi} = \emptyset$, \emph{(iv)} $b \co b'$ implies $b \not \sim b'$; by \autoref{def:fold2} this is true iff the relation $\sim_\psi$
is an IR folding equivalence. 
    \item $\phi_{\onl, \logs^-}^{RA}$ holds; this is true iff \emph{(i)} for every two events events $e,e'$ with different labels $v_e \not = v_{e'}$, \emph{(ii)} if $v_e = v_{e'}$ then for all $b \in \preset e$ there exists $b' \in \preset{e'}$  such that $v_b = v_{b'}$ and viceversa, \emph{(iii)} for any trace $\sigma \in \logs^-$ and any event $e' \in E$ with the same label as $\esig$ there exists a conditions $b \in \preset{e'}$ such that for any condition $b' \in \preset \esig$ we have $v_b \not = v_{b'}$; by the definition of $\sim_\psi$ we have \emph{(i)} for every two events $e,e'$ with different labels $e \not \sim_\psi e'$, \emph{(ii)} if $e \sim_\psi e'$ then $[\preset e]_{\sim_\psi} = [\preset{e'}]_{\sim_\psi}$ and \emph{(iii)} $[\preset{e'}]_{\sim_\psi} \not \subseteq [\preset \esig]_{\sim_\psi}$ for any negative trace $\sigma$ and event $e'$ with the same label as $\esig$; by \autoref{def:fold3} this is true iff $\sim_\psi$ is a RA folding equivalence. 
    \item $\phi_{\onl,k}^{MET}$ holds, this is true iff $v_e \leq k$ for every event $e \in E$; the encoding associates a number to each equivalence class (according to $\sim_\psi$) of events and bounds the number of equivalence classes by $k$, since the number of transitions in $\onls^{\sim_\psi}$ corresponds to the number of equivalent classes of events (see \autoref{def:foldednet}), this is true iff the number of transitions of $\onls^{\sim_\psi}$ is bounded by $k$.
\qed
  \end{enumerate}
\end{proof}
}

\subsection{Finding an Optimal Folding Equivalence}
\label{sec:opt_fold}

\autoref{sec:sat} explains how to compute a folding equivalence that generates a folded net with a bounded number of transitions; this section explain how to obtain the optimal folded net, i.e the one with minimal number of transitions satisfying the properties of \autoref{the:ip} and \autoref{the:ra}.

Iterative calls to the SMT solver can be done for a binary search with $k$ between $min_k$ and $max_k$; since only equally labeled events can be merged by the folding equivalence, the minimal number of transitions in the folded net is $min_k \eqdef \lvert A \rvert$; in the worst case, when events cannot be merged, $max_k \eqdef \lvert E \rvert$.

As a side remark, we have noted that the optimal folding equivalence can be encoded as a MaxSMT
problem~\cite{NieuwenhuisO06} where some clauses which are called hard must be true in a solution (in our case
$\phi_{\on}^{IP}$ and $\phi_{\on,\logs^-}^{RA}$) and some soft clauses may not ($\phi_{\on,k}^{MET}$ for $\lvert A
\rvert \leq k \leq \lvert E \rvert$); a MaxSMT solver maximizes the number of soft clauses that are satisfiable and thus
it obtains the minimal $k$ generating thus the optimal folded net.


\section{Experiments}
\label{sec:experiments}

\newcommand\ratiosm{r_{{\sys} \subseteq {\model}}}
\newcommand\ratioms{r_{{\model} \subseteq {\sys}}}

As a proof of concept, we implemented our approach into
a new tool called \podtool (Partial Order
Discovery).\footnote{Tool and benchmarks: 
\url{http://lipn.univ-paris13.fr/~rodriguez/exp/atva15/}.}
It supports synthesis of SP and IP folding equivalences using a restricted form
of our SMT encoding.
In particular \podtool merges all events with equal label, in contrast to the
encoding in \autoref{sec:computing} which may in general yield more than
one transition per log action.
While this ensures a minimum (optimal as per \autoref{sec:opt_fold})
number of folded transitions, the tool could
sometimes not find a suitable equivalence (unsatisfiable SMT encoding).
Since the number of transitions in the folded net is fixed,
it turns out that the quality of the mined model increases as we increase the
number of folded places, as we show below.
Using \podtool
we evaluate the ability of our approach to rediscover the
original process model, given its independence relation and a set of logs.
For this we have used standard benchmarks from the verification and process mining literature~\cite{MCC,WDHS08}.

\newcommand\newrow{\\}

\begin{table}[t]

\setlength\tabcolsep{3.7pt}
\def\sep{\hspace{22pt}}
\def\tinysep{\hspace{4pt}}
\def\negsep{\hspace{2.5pt}}

\centering
\footnotesize
\tt
\begin{tabular}{lrr@{\sep}rr@{\negsep}rr@{\sep}rr@{\negsep}rr}
\toprule
  \multicolumn{3}{l}{\rm\small Original}
& \multicolumn{4}{l}{\rm\small \podtool (max.\ places)}
& \multicolumn{4}{l}{\rm\small \podtool (60\% places)}
\\
  \cmidrule(r){1-3}
  \cmidrule(r){4-7}
  \cmidrule(r){8-11}
  \rm\small Benchmark
& $\lvert T \rvert$
& $\lvert P \rvert$
& $\ratiosm$
& $\ratioms$
& \rm\small \%Prec.
& $\lvert P \rvert$
& $\ratiosm$
& $\ratioms$
& \rm\small \%Prec.
& $\lvert P \rvert$
\\
\midrule

\rm\bench[22]{A}     &  22 &  20 & 0.99 & 1.00 & 0.77 & 19 & 0.57 & 1.00 & 0.22 & 11 \newrow
\rm\bench[32]{A}     &  32 &  32 & 1.00 & 1.00 & 0.80 & 32 & 0.46 & 1.00 & 0.19 & 19 \newrow
\rm\bench[42]{A}     &  42 &  47 & 0.98 & 1.00 & 0.54 & 40 & 0.79 & 1.00 & 0.21 & 28 \newrow
\rm\bench[32]{T}     &  33 &  31 & 1.00 & 1.00 & 0.88 & 31 & 0.54 & 1.00 & 0.19 & 18 \newrow
\rm\bench[1]{Angio}  &  64 &  39 & 0.39 & 0.94 & 0.18 & 21 & 0.10 & 0.92 & 0.06 & 13 \newrow
\rm\bench{Complex}   &  19 &  13 & 0.98 & 1.00 & 0.62 & 12 & 0.62 & 1.00 & 0.39 & 7  \newrow
\rm\bench{ConfDimB}  &  11 &  10 & 1.00 & 1.00 & 1.00 & 10 & 0.62 & 1.00 & 0.39 & 6  \newrow
\rm\bench[5]{Cycles} &  20 &  16 & 1.00 & 1.00 & 1.00 & 16 & 0.60 & 1.00 & 0.40 & 6  \newrow
\rm\bench[2]{DbMut}  &  32 &  38 & 0.98 & 0.98 & 0.94 & 32 & 0.76 & 0.98 & 0.21 & 19 \newrow
\rm\bench{Dc}        &  32 &  35 & 0.99 & 0.99 & 0.77 & 27 & 0.84 & 0.99 & 0.38 & 21 \newrow
\rm\bench[2]{Peters}  & 126 & 102 & 0.45 & 1.00 & 0.07 & 51 & 0.30 & 1.00 & 0.05 & 30
\\
\bottomrule
\end{tabular}
\vspace{0pt}
\rm
\caption{Experimental results.}
\label{tab:exp}
\end{table}

%
%
%

In our experiments, \autoref{tab:exp},
we consider a set of original processes faithfully modelled as safe Petri nets.
For every model $\sys$ we consider a log~$\logs$, i.e. a subset of its traces.
We extract from~$\sys$
the (best) independence relation~$\indu \sys$ that an expert could provide.
We then provide~$\logs$ and~$\indu \sys$ to \podtool and
find an SP folding equivalence with the largest number
of places (\cols ``max.\ places'') and
with~60\% of the places of~$\sys$ (last group of \cols),
giving rise to two different mined models.
All three models, original plus mined ones, have perfect fitness but varying
levels of precision, i.e. traces of the model not present in the log.
For the mined models, we report (\cols ``\%Prec.'')
on the ratio between their precision and the precision of the original model~$\sys$.
All precisions were estimated using the technique from~\cite{AMCDA15}.
All \podtool running times were below 10s.

Additionally, we measure how much independence of the original model is
preserved in the mined ones.
For that, we define the ratios~$\ratiosm \eqdef
\nicefrac{|{\indu \sys} \cap {\indu \model}|}{|\indu \sys|}$
and~$\ratioms \eqdef
\nicefrac{|{\indu \sys} \cap {\indu \model}|}{|\indu \model|}$.
The closer $\ratiosm$ is to~1, the larger is the number of
pairs in $\indu \sys$ also contained in $\indu \model$
(\ie, the more independence was preserved),
and conversely for $\ratioms$
(the less independence was ``\emph{invented}'').
Remark that
${\indu \sys} = {\indu \model}$ iff
$\ratiosm = \ratioms = 1$.


%
%
%

In 7 out of the 11 benchmarks in \autoref{tab:exp} our proof-of-concept tool
rediscovers the original model or finds one with only minor differences.
This is even more encouraging when
considering that we only asked \podtool to find SP equivalences
which, unlike IP, do not guarantee preservation of independence.
In 9 out of 11 cases both ratios $\ratiosm$ and $\ratioms$ are above 98\%,
witnessing that independence is almost entirely preserved.
Concerning the precision,
we observe that it is mostly preserved for these 9 models.
We observe a clear correlation between the number of discovered places and the
precision of the resulting model.
The running times of \podtool on all benchmarks in \autoref{tab:exp}
were under few seconds.

In \bench[2]{Peters} and \bench[1]{Angio} our tool could not increase the
number of places in the folded net, resulting in a significant loss of
independence and precision. We tracked the reason down to
(a) the additional restrictions on the SMT encoding imposed by our implementation and
(b) the algorithm for transforming event structures
into unfoldings (\ie, introducing conditions).
We plan to address this in future work.
This also prevented us from of employing IP equivalences instead of SP for
these experiments: \podtool could find IP equivalences for only~5 out of~11 cases.
Nonetheless, as we said before, in 9 out of 11 the found SP
equivalences preserved at least 98\% of the independence.

Finally,
we instructed \podtool to synthesize~SP equivalences folding into an arbitrarily
chosen low number of places (60\% of the original).
Here we observe a large reduction of precision and significant loss of
independence (surprisingly only~$\ratiosm$ drops, but not~$\ratioms$).
This witnesses a strong dependence between the number of discovered places
and the ability of our technique to preserve independence.

\section{Related Work}

To the best of our knowledge, there is no technique in the literature that solves the particular problem we are
considering in this paper: given a set of positive and negative traces and an independence relation on events, derive a
Petri
net that both preserves the independence relation and satisfies the quality dimensions enumerated in
\autoref{sec:prelim}.
However, there is
related work that intersects partially with the techniques of this paper.
We now report on it.

Perhaps the closest work is~\cite{FahlandA13}, where the simplification of an initial process model is done by first
unfolding the model (to derive an overfitting model) and then folding it back in a controlled manner, thus
generalizing part of the behavior. The
approach can only be applied for fitting models, which hampers its applicability unless alignment
techniques~\cite{AryaThesis} are used. 
The folding equivalences presented
in this paper do not consider a model and therefore are less restrictive than the ones presented in~\cite{FahlandA13}. 

{\em Synthesis} is a  problem different from discovery: in synthesis, the underlying system is given and therefore
one can assume $\sys = \logs$. Considering a synthesis scenario, Bergenthum {\em et al.} have investigated the
synthesis of a p/t net from partial orders~\cite{BergenthumDLM08}. The class of nets considered in this paper (safe
Petri nets) is less expressive than p/t nets, which in practice poses no problems in the context of business processes.
The algorithms in~\cite{BergenthumDLM08} are grounded in the {\em theory of regions} and split the problem into two
steps  \emph{(i)} the p/t net $\model$ is generated which, by construction, satisfies $\logs \subseteq \obs {\model}$,
and \emph{(ii)} it is checked whereas $\logs = \obs {\model}$. Actually, by avoiding \emph{(ii)}, a discovery scenario
is obtained where the generalization feature is not controlled, in contrast to the technique of this paper. With the
same goal but relying on ad-hoc operators tailored to compose lpos (choice, sequentialization, parallel compositions and 
repetition), a discovery technique is presented in~\cite{BergenthumDML09}. Since the operators may in practice introduce
wrong generalizations, a domain expert is consulted for the legality of every extra run.

\section{Conclusions}

A fresh look at process discovery is presented in this paper, which establishes
theoretical basis for coping with some of
the challenges in the field. By automating the folding of the unfolding that covers traces in the log but also
combinations thereof derived from the input independence relation, problems like log incompleteness and noise may be
alleviated. The approach has been implemented and the initial results show the potential of the technique in
rediscovering a model, even for the simplest of the folding equivalences described in this paper.

Next steps will focus on implementing the remaining folding equivalences,
and in general improving the SMT constraints for computing folding equivalences.
Also, incorporating the notion of trace frequency in the approach
will be considered, to guide the technique to focus on principal behavior. This will allow to also test the tool in
presence of incomplete or noisy logs.

\bibliographystyle{plain}

\bibliography{mybib}

\end{document}